\crefname{section}{Section}{Sections}
\crefname{subsection}{Subsection}{Subsections}
\crefname{appendix}{Appendix}{Appendix}
\crefname{figure}{Figure}{Figures}
\crefname{table}{Table}{Tables}
\crefname{property}{Property}{Properties}
\crefname{theorem}{Theorem}{Theorem}
\crefname{criterion}{criterion}{criteria}
\newtheorem{theorem}{Theorem}
\newtheorem{lemma}[theorem]{Lemma}
\newtheorem{property}[theorem]{Property}
\newtheorem{remark}{Remark}
\newcommand\bA{{\bf A}}
\newcommand\bD{{\bf D}}
\newcommand\bDn{{\tilde{\bD}}}
\newcommand\bKn{{\tilde{\bK}}}
\newcommand\bK{{\bf K}}
\newcommand\bM{{\bf M}}
\newcommand\bT{{\bf T}}
\newcommand\bbR{\mathbb{R}}
\newcommand\bbN{\mathbb{N}}
\newcommand\bbG{\mathbb{G}}
\newcommand\bbS{\mathbb{S}}
\newcommand\br{\boldsymbol{r}}
\newcommand\bv{\boldsymbol{v}}
\newcommand\bw{\boldsymbol{w}}
\newcommand\bx{\boldsymbol{x}}
\newcommand\bxi{\bm{\xi}}
\newcommand\bmeta{\bm{\eta}}
\newcommand\dd{\,\mathrm{d}}
\newcommand\constantsymbol{{\frac{m}{\hat{h}^3}}}
\newcommand\Temperature{RT}
\newcommand\zz{{\mathfrak{z}}}
\newcommand\Li{\mathrm{Li}}
\newcommand\Lihalf[1]{{\Li_{\frac{#1}{2}}}}
\newcommand\PN{{$P_N$ }}
\newcommand\MN{{$M_N$ }}
\newcommand\MP[1]{{${M\!P}_{#1}$ }}
\newcommand\MPN{{\MP{\!N}}}
\newcommand\Mone{{$M_1$ }}
\newcommand\HMP[1]{{${H\!M\!P}_{#1}$ }}
\newcommand\HMPN{{\HMP{\!N}}}
\newcommand\He{{\mathrm{He}}^{[\bm\eta]}}
\newcommand\mHe{{\mathcal{H}}^{[\bm\eta]}}
\newcommand\mE{{\mathcal{E}}}
\newcommand\mN{{\mathcal{N}}}
\newcommand\mEn{{\tilde{\mE}}}
\newcommand\mB{{\mathcal{B}}}
\newcommand\mS{{\mathcal{S}}}
\newcommand\mK{{\mathcal{K}}}
\newcommand\mI{{\mathcal{I}}}
\newcommand\weight{\omega^{[\bmeta]}}
\newcommand\weightf{\tilde{\omega}^{[\bmeta]}}
\newcommand\pl{\phi^{[\bmeta]}}
\newcommand\psl{\psi^{[\bmeta]}}
\newcommand\pd[2]{\dfrac{\partial {#1}}{\partial {#2}}}
\newcommand\od[2]{\dfrac{\dd {#1}}{\dd {#2}}}
\numberwithin{equation}{section}
\newcommand\delete[1]{}
\title{Direct Flux Gradient Approximation to Close Moment Model for
  Kinetic Equations}
\author{ Ruo Li\thanks{CAPT, LMAM \& School of Mathematical Sciences,
    Peking University, Beijing, China, email: {\tt
      rli@math.pku.edu.cn}},~~ Weiming Li\thanks{Laboratory of
    Computational Physics, Institute of Applied Physics and
    Computational Mathematics, Beijing, China, email:
    \tt{liweiming@pku.edu.cn}},~~ Lingchao Zheng\thanks{School of
    Mathematical Sciences, Peking University, Beijing, China, email:
    {\tt lczheng@pku.edu.cn}} } \date{\today}
\begin{document}
\maketitle

\begin{abstract}
  To close the moment model deduced from kinetic equations, the
  canonical approach is to provide an approximation to the flux
  function not able to be depicted by the moments in the reduced
  model. In this paper, we propose a brand new closure approach with
  remarkable advantages than the canonical approach. Instead of
  approximating the flux function, the new approach close the moment
  model by approximating the flux gradient. Precisely, we approximate
  the space derivative of the distribution function by an ansatz which
  is a weighted polynomial, and the derivative of the closing flux is
  computed by taking the moments of the ansatz. Consequently, the
  method provides us an improved framework to derive globally
  hyperbolic moment models, which preserve all those conservative
  variables in the low order moments. It is shown that the linearized
  system at the weight function, which is often the local equilibrium,
  of the moment model deduced by our new approach is automatically
  coincided with the system deduced from the classical perturbation
  theory, which can not be satisfied by previous hyperbolic
  regularization framework.  Taking the Boltzmann equation as example,
  the linearlization of the moment model gives the correct
  Navier-Stokes-Fourier law same as that the Chapman-Enskog expansion
  gives. Most existing globally hyperbolic moment models are
  re-produced by our new approach, and several new models are proposed
  based on this framework.

  \vspace*{4mm}
  
  \noindent {\bf Keywords:} Kinetic equation; moment model; global
  hyperbolicity; conservation law; Maxwellian iteration.
\end{abstract}


\section{Introduction}

The kinetic equation is the evolution equation of particle
distribution function in phase space, and describes the motion of
particles and their interactions with each other or with the background
medium. Common examples of the kinetic equation
include the Boltzmann equation for rarefied gas, the radiative
transfer equation for photon transport, the vlasov equation for
plasmas, etc. They have wide applications in fields like rarefied gases,
microflow, semi-conductor device simulation, radiation astronomy,
optical imaging, and so on. Among the numerous methods developed to solve the
kinetic equations
\cite{broadwell1964study,Bird,hayakawa2007coupled,larsen2010advances}, 
moment methods are attractive due to their
clear physical interpretation and high efficiency in the transitional
regimes \cite{Grad,minerbo1978maximum,levermore1996moment,
McDonald,Fan_new,Koellermeier,MPN}.

The moment method approximates the
original kinetic equation by studying the evolution of a finite number
of moments of the distribution function. In gas kinetic theory, it was
first proposed in Grad's seminal paper \cite{Grad} in 1949, in which the
notable Grad's 13 moment system is also presented. For any moment
model, the evolution equations of the lower order moments rely on the
higher order moments, hence a moment closure is needed, which is the
central problem of the moment method.  Different types of moment
closures have been developed, resulting in many different kinds of
moment models. These models could roughly be divided into two types:
the first type are called hyperbolic or first order PDEs, these
include Grad's 13 moment system \cite{Grad}, the maximum entropy moment
system \cite{levermore1996moment}, 
approximate maximum entropy moment systems \cite{McDonald}, the
\PN and \MN models for radiative transfer
\cite{pomraning1973equations,minerbo1978maximum}, etc. The second kind
are parabolic type or second order PDEs, these include regularized
moment methods of various kinds \cite{Struchtrup2003, Struchtrup2004,
Torrilhon2004} and the diffusion approximation
for the radiative transfer equation. This paper focus on discussing
the first order moment models for their several advantages, as pointed
out in \cite{mc2013affordable}. First, as
only the first derivatives are computed, an extra order of spatial
accuracy can potentially be obtained for a given stencil. Secondly,
numerical solution of first order systems are less sensitive to grid
irregularities \cite{mcdonald2009use}, which often occur for practical
situations where there are complex geometries, or when adaptive mesh
refinement are used. For this first order system of equation, an
important property is its hyperbolicity, which is the necessary
condition for the local well-posedness of problem with Cauchy data.
However, obtaining global hyperbolicity for moment systems is not
trivial. In \cite{Grad13toR13} the authors pointed out that Grad's 13
moment system is not hyperbolic even in any neighbourhood of the
Maxwellian. Over the past decade, much effort has been devoted to
obtaining globally hyperbolic moment models. Levermore proved in
\cite{levermore1996moment} that the maximum entropy moment models are
globally hyperbolic. The approximate affordable robust version of the
14 moment system of the hierarchy studied by Levermore, which is
proposed by McDonald and Torrilhon in \cite{McDonald}, is almost
globally hyperbolic. There has also been on-going effort on hyperbolic
regularization of Grad-type moment models, such as the hyperbolic
regularization proposed by Cai et. al \cite{Fan,Fan_new}, and the
quadrature-based regularization method proposed in
\cite{Koellermeier}. Based on understanding of these regularization
methods, \cite{framework} and  \cite{Fan2015} proposed a general
framework to deduce globally hyperbolic moment systems, where the
distribution function is approximated by weighted polynomials. The
framework has been applied to various fields to obtain globally
hyperbolic moment models in a routine way. 

However, for some moment systems, applying the framework proposed in
\cite{framework, Fan2015} is a naive way. For instance, in
\cite{HMPN}, a direct application of the framework changes the \Mone
model, which is already globally hyperbolic and based on which the
\MPN model is derived.  On the other hand, in
\cite{di2016quantum,HMPN}, it was pointed out that the moment model
derived by directly applying the framework in \cite{framework,
Fan2015} lead to incorrect NSF law and Eddington approximation, when a
one-step Maxwellian iteration is applied. There has been some recent
progress in this direction. In \cite{HMPN}, for the \MPN model of the
radiative transfer equation \cite{MPN}, a new hyperbolic
regularization was proposed, in which the distribution function and
its derivative were approximated in different spaces.  The \HMPN
model, based on this hyperbolic regularization, no longer had the
above disadvantages. This inspires us to propose a new hyperbolic
regularization for the moment models of kinetic equation.

In this paper, we propose a new approach to close hyperbolic moment
models derived for kinetic equations. We directly approximate the flux
gradient to give the moment closure. Precisely, we approximate the
derivative of the distribution function by a weighted polynomial. In
traditional moment methods, the moment closure is often given by
approximating the distribution function, and some of them suffer from
lack of hyperbolicity \cite{Grad13toR13, QuantumGrad13, MPN}. The new
closure approach provide us an improved framework \cite{framework,
  Fan2015} to carry out model reduction for generic kinetic
equations. We first prove that using the improved framework, the
moment system is globally hyperbolic, as long as the weight function
satisfies the basic requirement of positivity. Moreover, for an $N$-th
order moment system, moments with orders from $0$ to $N-1$ are
conservative, without considering the right hand side. Additionally,
the moment model by the improved framework does not change the
consequence of Maxwellian iteration, which means the NSF laws are
preserved. As mentioned above as comparison, direct application of
hyperbolic regularization might not preserve this property
\cite{di2016quantum, HMPN}. Furthermore, we show that the HME model
\cite{Fan,Fan_new} for the Boltzmann equation, the \PN model, the \MN
model, and the \HMPN model for the radiative transfer equation
\cite{pomraning1973equations, minerbo1978maximum, HMPN, 3DMPN} can be
regarded as special cases derived by the new approach. Using the new
closure approach, we propose new globally hyperbolic moment systems
for kinetic equations, such as \HMPN model for monochromatic case and
a moment model for the Boltzmann-Peierls equation. The improved
framework is extended to 3-D case naturally, where the advantages
discussed above are all preserved. Using the improved framework, we
derive a new hyperbolic moment model for the Grad's 13-moment system
for quantum gas.

The remaining parts of the paper is organized as follows. In
\cref{sec:model1d}, we introduce our new closure approach in the 1-D
case as the essential ingredients of the improved framework, and some
properties are proved. In \cref{sec:example1d}, we list some classical
moment models and put them into the category of the improved
framework, while some new models are proposed. In
\cref{sec:modelmultid}, the framework is extended to 3-D case, and the
3-D examples are shown in \cref{sec:examplemultid}. The paper ends
with a conclusion in \cref{sec:conclusion}.



\section{Model in one-dimensional case} \label{sec:model1d}
In this section, we first introduce the moment method for kinetic equation 
and our new framework in 1-D case, 
where the kinetic equation is often given by 
\begin{equation}
  \label{eq:kinetic1d}
  \pd{f}{t} + v(\xi)\pd{f}{x} = \mathcal{S}(f),\quad
  t\in\bbR^+,x\in\bbR,v(\xi)\in\bbR,\xi\in\bbG\subset\bbR.
\end{equation}
$f(t,x,\xi)$ is the distribution function depending on time $t$,
spatial variable $x$ and velocity-related variable $\xi$.
The $k$-th moment is defined as $E_{k}(t,x):=\langle \tau(\xi)v^kf\rangle$,
where $\tau(\xi)>0$, and  
\[
\langle \cdot \rangle := \int_{\bbG} \cdot \dd \xi.
\]
An $N$-th order moment system can be  written as 
\begin{equation}
  \label{eq:system1d}
   \pd{E_k}{t} + \pd{E_{k+1}}{x} = S_k,
   \quad 0\leq k\leq N,
 \end{equation}
where 
\[
   S_k := \langle \tau(\xi) v^k\mathcal{S}\rangle.
\]
However, the moment system \eqref{eq:system1d} is not closed, since
there are $N+1$ equations and $N+2$ variables
$E_0,E_1,\cdots,E_{N+1}$. Thus a moment closure needs to be applied. A
typical moment closure is to use a function of $E_0,E_1,\cdots,E_N$
instead of $E_{N+1}$ in \eqref{eq:system1d}, which is given by
\begin{equation}
 \label{eq:oldclosure1d}
  E_{N+1} = E_{N+1}(E_0,E_1,\cdots,E_N),
\end{equation}
and a commonly used method to obtain \eqref{eq:oldclosure1d} is to give
an ansatz $\hat{f}$, which satisfies  
\begin{equation}\label{eq:constrains_f}
  \langle \tau(\xi) v^k \hat{f}\rangle = E_k,\quad  0\leq k\leq N.
\end{equation}
Then $E_{N+1}$ is taken as the $(N+1)$-th moment of $\hat{f}$ to
finish the moment closure. Based on this idea, many practical moment
models
\cite{Grad,levermore1996moment,QuantumGrad13,alldredge2016approximating,MPN}
for kinetic equations have been proposed.

On the other hand, in order to make \eqref{eq:system1d} closed, one only needs
to give the moment closure on $\pd{E_{N+1}}{x}$. This inspires us to
take the moment closure as  
\begin{equation}
 \label{eq:newclosure1d}
 \pd{E_{N+1}}{x} = \pd{E_{N+1}}{x}
 \left(E_0,E_1,\cdots,E_N;\pd{E_0}{x},\pd{E_1}{x},\cdots,\pd{E_{N}}{x}\right).
\end{equation}
It is natural to expect a quasilinear system to be derived, that we
require that $\pd{E_{N+1}}{x}$ relies linearly on
$\pd{E_0}{x},\cdots,\pd{E_{N}}{x}$. Since we can use an ansatz
$\hat{f}$ to approximate the distribution function $f$, we can
similarly give an ansatz $\hat{g}$ on $\pd{f}{x}$, which satisfies
\begin{equation}
    \label{eq:constrains}
    \langle \tau(\xi)v^k \hat{g} \rangle = \pd{E_k}{x},\quad 0\leq k\leq N,
\end{equation}
and the moment closure \eqref{eq:newclosure1d} is then given by 
\[
  \pd{E_{N+1}}{x} = \langle \tau(\xi) v^{N+1}\hat{g} \rangle.
\]
\subsection{Model deduction}
\cite{Grad,QuantumGrad13,MPN} suggest using a weighted polynomial to
approximate the distribution function $f$ to give the moment closure
\eqref{eq:oldclosure1d}, which motivates us to use a
weighted polynomial to approximate $\pd{f}{x}$ to give the moment
closure \eqref{eq:newclosure1d}. Precisely, we take a positive 
weight function $\weight$, where $\bmeta$ is a vector composed of some parameters
depend on moments $E_0,E_1,\cdots,E_N$, and the ansatz is given by 
\begin{equation}
  \label{eq:ansatz}
  \hat{g} = \weight\sum_{i=0}^{N}g_iv^i.
\end{equation}
Denote $\mE_{i} = \langle \tau(\xi)v^i\weight\rangle$ as the $i$-th moment of
weight function $\weight$, and matrix $\bD\in\bbR^{(N+1)\times(N+1)}$
and $\bK\in\bbR^{(N+1)\times(N+1)}$ satisfy 
$\bD_{i,j} = \mE_{i+j}$, and $\bK_{i,j} = \mE_{i+j+1}$, $0\leq i,j\leq
N$, i.e.
\begin{equation}
  \label{eq:matrixDK}
  \bD = \left[
  \begin{array}[H]{ccccc}
    \mE_{0}&\mE_{1}&\mE_{2}&\cdots&\mE_{N}\\
    \mE_{1}&\mE_{2}&\mE_{3}&\cdots&\mE_{N+1}\\
    \mE_{2}&\mE_{3}&\mE_{4}&\cdots&\mE_{N+2}\\
    \vdots&\vdots&\vdots&\ddots&\vdots\\
    \mE_{N}&\mE_{N+1}&\mE_{N+2}&\cdots&\mE_{2N}
  \end{array}
  \right]
  ,\quad \bK = \left[
  \begin{array}[H]{ccccc}
    \mE_{1}&\mE_{2}&\mE_{3}&\cdots&\mE_{N+1}\\
    \mE_{2}&\mE_{3}&\mE_{4}&\cdots&\mE_{N+2}\\
    \mE_{3}&\mE_{4}&\mE_{5}&\cdots&\mE_{N+3}\\
    \vdots&\vdots&\vdots&\ddots&\vdots\\
    \mE_{N+1}&\mE_{N+2}&\mE_{N+3}&\cdots&\mE_{2N+1}
  \end{array}
  \right].
\end{equation}
Since $\tau(\xi)>0$ and $\weight(\xi)>0$, 
it is not difficult to prove that $\bD$ is symmetric and 
positive definite, and $\bK$ is symmetric. Furthermore,
denote $\bm g = (g_0,g_1,\cdots,g_N)^T\in\bbR^{N+1}$, then
according to the constraints \eqref{eq:constrains}, we have 
\begin{equation}
  \label{eq:constrains_matrix}
  \bD\bm g = \left( \pd{E_0}{x},\pd{E_1}{x},\cdots,\pd{E_N}{x}
  \right)^T,
\end{equation}
and the moment closure is given by 
\begin{equation}
  \label{eq:closure_matrix}
  \left( \pd{E_1}{x},\pd{E_2}{x},\cdots,\pd{E_{N+1}}{x}
  \right)^T  = \bK\bm g = \bK\bD^{-1}
  \left( \pd{E_0}{x},\pd{E_1}{x},\cdots,\pd{E_N}{x}
  \right)^T.
\end{equation}
Notice in \eqref{eq:closure_matrix}, the gradient of the flux
functions rely linearly on $\pd{E_k}{x}$, $k = 0, \cdots,N$, which
satisfies our previous requirement.  Therefore, the moment system
\eqref{eq:system1d} can be written as 
\begin{equation}
  \label{eq:system1d_matrix}
  \pd{\bm E}{t} + \bK\bD^{-1}\pd{\bm E}{x} = \bm S,
\end{equation}
where $\bm E = (E_0,E_1,\cdots,E_N)^T\in\bbR^{N+1}$, and $\bm S =
(S_0,S_1,\cdots,S_N)^T\in\bbR^{N+1}$.
\subsection{Hyperbolicity} 
The moment system \eqref{eq:system1d_matrix} is globally hyperbolic,
and the eigenvalues can be calculated.
We first give this theorem:
\begin{theorem}
  \label{thm:hyper1d}
  Moment system \eqref{eq:system1d_matrix} is globally hyperbolic.
\end{theorem}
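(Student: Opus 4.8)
The plan is to reduce global hyperbolicity of \eqref{eq:system1d_matrix} to the two structural facts already recorded in the text — $\bD$ is symmetric positive definite and $\bK$ is symmetric — and then, as a byproduct, to read off the eigenvalues from the orthogonal polynomials of the weight. Recall that hyperbolicity of \eqref{eq:system1d_matrix} means the coefficient matrix $\bK\bD^{-1}$ is diagonalizable over $\bbR$, and that \emph{global} hyperbolicity requires this for every admissible parameter vector $\bmeta$; so the argument must use only positivity of $\weight$ and never a restriction on the moment state.

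First I would symmetrize. Because $\bD$ is symmetric positive definite it admits a symmetric positive definite square root $\bD^{1/2}$, and
\begin{equation*}
  \bD^{-1/2}\bigl(\bK\bD^{-1}\bigr)\bD^{1/2} = \bD^{-1/2}\bK\bD^{-1/2},
\end{equation*}
whose right-hand side is symmetric since $\bK$ is. Hence $\bK\bD^{-1}$ is similar to a real symmetric matrix, so it has $N+1$ real eigenvalues and a complete set of eigenvectors; equivalently, one may invoke the standard theory of the generalized eigenvalue problem $\bK\bm r=\lambda\bD\bm r$ with $\bD$ positive definite, which yields a $\bD$-orthogonal eigenbasis and real eigenvalues. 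Either route shows \eqref{eq:system1d_matrix} is hyperbolic. To justify the word "global" I would note that the entries $\mE_i=\langle\tau(\xi)v^i\weight\rangle$ are moments of the positive measure $\tau(\xi)\weight\dd\xi$ for \emph{any} admissible $\bmeta$, so $\bD$ is the positive definite Gram matrix of $1,v,\dots,v^N$ in $L^2(\tau\weight\dd\xi)$ throughout the admissible set and the similarity transform above is available uniformly — nothing degenerates as $\bmeta$ varies.

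For the eigenvalues promised just above the theorem I would use the moment-matrix / orthogonal-polynomial picture. Since $AB$ and $BA$ share a spectrum, $\bK\bD^{-1}$ has the same eigenvalues as $\bD^{-1}\bK$, and a short computation from \eqref{eq:matrixDK} (namely $\bD\,[T]=\bK$) shows that $\bD^{-1}\bK$ is precisely the matrix, in the monomial basis $\{1,v,\dots,v^N\}$, of the compression to $\mathbb{P}_N$ of the multiplication operator $\phi\mapsto v\phi$ acting in $L^2(\tau\weight\dd\xi)$. The eigenvalues of such a truncated multiplication (equivalently, of the associated Jacobi/tridiagonal matrix) are the $N+1$ roots of the degree-$(N+1)$ orthogonal polynomial of the measure $\tau(\xi)\weight\dd\xi$, i.e.\ the Gauss quadrature nodes; in particular they are real and pairwise distinct, so \eqref{eq:system1d_matrix} is in fact strictly hyperbolic.

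The symmetrization step is essentially free. The step that needs care is the last one: verifying that the shifted Hankel matrix $\bK$ is exactly the compression of $v\,\cdot$ to $\mathbb{P}_N$ in the monomial basis — in particular the bookkeeping of the top row and column, where $v^{N+1}$ leaves $\mathbb{P}_N$ and must be projected back — and matching its characteristic polynomial with the $(N+1)$-th orthogonal polynomial. I expect that identification, rather than the hyperbolicity itself, to be the main obstacle in a fully rigorous write-up.
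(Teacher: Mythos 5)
Your proof is correct and is essentially the paper's own argument: the paper's proof of \cref{thm:hyper1d} consists precisely of the observation that $\bD$ is symmetric positive definite and $\bK$ is symmetric, hence $\bK\bD^{-1}$ is real diagonalizable, and your symmetrization via $\bD^{-1/2}\bK\bD^{-1/2}$ just makes that standard fact explicit. The additional material on the spectrum being the roots of $\pl_{N+1}$ is not needed for this theorem and reproduces what the paper proves separately as \cref{thm:eigen1d} (there by a direct computation with the generalized eigenvalue problem $(\bK-\lambda\bD)\br=0$ rather than via the compressed multiplication operator).
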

\begin{proof}
  To obtain the hyperbolicity, one only needs to investigate whether
  $\bK\bD^{-1}$ is real diagonalizable. Notice that $\bD$ is symmetric
  and positive definite, and $\bK$ is symmetric, thus $\bK\bD^{-1}$ is
  real diagonalizable. Therefore, moment system
  \eqref{eq:system1d_matrix} is globally hyperbolic.
\end{proof}
\subsubsection{Orthogonal polynomials}
In order to investigate the eigenvalue of
\eqref{eq:system1d_matrix}, we first introduce a series of monic 
orthogonal polynomials $\pl_i(v)$, which satisfy
\begin{equation}
  \label{eq:orthogonality1d}
  \int_{\bbG} \tau(\xi)\weight(\xi)
  \pl_i(v)\pl_j(v)\dd\xi = 0,\quad \text{when } i \neq j.
\end{equation}
The Gram-Schmidt procedure to obtain the orthogonal polynomial can be formulated as 
\begin{equation}
  \label{eq:Gram1d}
  \pl_0 = 1,\quad \pl_i = v^i - \sum_{j=0}^{i-1}
  \dfrac{\mK_{i,j}}{\mK_{j,j}} \pl_j,
\end{equation}
where 
\begin{equation}
  \label{eq:mK1d}
  \mK_{i,j} := \langle \tau(\xi)\weight(\xi) v^i\pl_j\rangle =
  \int_{\bbG} \tau(\xi)\weight(\xi) v^i\pl_j\dd\xi.
\end{equation} 
According to the orthogonality \eqref{eq:orthogonality1d}, we have 
\[ 
 \mK_{i,i} = \langle \tau(\xi)\weight(\xi)(\pl_i)^2\rangle >0,\quad 
 \mK_{k,i} = 0, \quad \text{ when } k < i.
\]
Furthermore, multiplying $v^k\tau(\xi)\weight(\xi)$ by \eqref{eq:Gram1d},
and taking integration with respect to $\xi$ on $\bbG$, 
$\mK_{k,i}$ can be calculated by 
\[
\mK_{k,i} = \mE_{k+i} - \sum_{j=0}^{i-1}
\dfrac{\mK_{i,j}}{\mK_{j,j}}\mK_{k,j}.
\]
Therefore, if $\tau(\xi) > 0$ and $\weight(\xi)>0$, one can always
define the orthogonal polynomials $\pl_i$ by \eqref{eq:Gram1d}.
Then, on the eigenvalue of \eqref{eq:system1d_matrix}, we have the
following theorem: 
\begin{theorem}
  \label{thm:eigen1d}
  The eigenvalues of the moment system \eqref{eq:system1d_matrix} 
  are the zeros of the $(N+1)$-th orthogonal polynomial $\pl_{N+1}(v)$.
\end{theorem}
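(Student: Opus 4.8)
The plan is to compute the characteristic polynomial of $\bK\bD^{-1}$ in closed form and recognize it as $\pl_{N+1}$. Since $\bD$ is invertible, $\bK\bD^{-1}$ and $\bM:=\bD^{-1}\bK$ are similar (conjugate by $\bD$), so they share the same characteristic polynomial; I would therefore work with $\bM$, and the goal becomes to show that $\bM$ is a companion matrix whose associated monic polynomial is exactly $\pl_{N+1}$.

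Let $\be_0,\dots,\be_N$ denote the standard basis of $\bbR^{N+1}$. The first step reads the columns of $\bM$ off \eqref{eq:matrixDK}: because $\bK_{k,j}=\mE_{k+j+1}=\bD_{k,j+1}$, we have $\bK\be_j=\bD\be_{j+1}$ and hence $\bM\be_j=\be_{j+1}$ for $0\le j\le N-1$. Thus $\bM$ has the companion shape, with only its last column $\br:=\bM\be_N=\bD^{-1}\bK\be_N=\bD^{-1}(\mE_{N+1},\dots,\mE_{2N+1})^{T}$ possibly nontrivial.

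The second step identifies $\br$ with the coefficient vector of $\pl_{N+1}$ in the monomial basis. The Gram--Schmidt recursion \eqref{eq:Gram1d} produces a monic $\pl_{N+1}$ of degree $N+1$ that is orthogonal to $\pl_0,\dots,\pl_N$, hence to all of $\mathrm{span}\{1,v,\dots,v^N\}$, under the functional $\langle \tau(\xi)\weight(\xi)\,\cdot\,\rangle$; this requires only the positivity of $\mK_{0,0},\dots,\mK_{N,N}$, already established above. Writing $\pl_{N+1}(v)=v^{N+1}-\sum_{j=0}^N r_j v^j$, the $N+1$ orthogonality conditions $\langle\tau\weight v^k\pl_{N+1}\rangle=0$, $0\le k\le N$, become $\mE_{N+1+k}=\sum_{j=0}^N \bD_{k,j}r_j$, i.e.\ $\bD\br=(\mE_{N+1},\dots,\mE_{2N+1})^{T}=\bK\be_N$. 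Invertibility of $\bD$ then forces this $\br$ to coincide with the last column of $\bM$ found above.

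It remains to invoke the standard fact that the companion matrix with last column $(r_0,\dots,r_N)^{T}$ has characteristic polynomial $\lambda^{N+1}-r_N\lambda^N-\dots-r_1\lambda-r_0=\pl_{N+1}(\lambda)$; together with the similarity $\bK\bD^{-1}=\bD\bM\bD^{-1}$ this shows that the eigenvalues of \eqref{eq:system1d_matrix} are precisely the zeros of $\pl_{N+1}$. There is no serious obstacle here; the only point deserving care is the well-definedness and the genuine $\mathrm{span}\{1,\dots,v^N\}$-orthogonality of $\pl_{N+1}$, which is why I would make explicit that \eqref{eq:Gram1d} at index $N+1$ needs only $\mK_{j,j}>0$ for $j\le N$. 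An essentially equivalent alternative is to change basis in $\lambda\bD-\bK$ from $\{v^i\}_{i=0}^N$ to $\{\pl_i\}_{i=0}^N$ by a unit-triangular matrix (leaving $\det(\lambda\bD-\bK)$ unchanged) and then use the three-term recurrence for the $\pl_i$ to collapse the resulting tridiagonal determinant onto the recurrence defining $\pl_{N+1}$, yielding $\det(\lambda\bD-\bK)=\big(\prod_{i=0}^N\mK_{i,i}\big)\pl_{N+1}(\lambda)$.
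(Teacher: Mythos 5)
Your proof is correct, and it takes a genuinely different route from the paper's. The paper argues via eigenvectors of the pencil $\bK-\lambda\bD$: for an eigenvalue $\lambda$ with null vector $\br$, the relations $\sum_i(\mE_{k+i+1}-\lambda\mE_{k+i})r_i=0$ are read as the statement that $(v-\lambda)\sum_i r_iv^i$ is orthogonal to $1,v,\dots,v^N$ under $\tau\weight$, so this degree-$(N+1)$ polynomial must be a nonzero multiple of $\pl_{N+1}$ and hence $\pl_{N+1}(\lambda)=0$. You instead compute the characteristic polynomial outright, by conjugating to $\bD^{-1}\bK$, observing the shift structure $\bK\be_j=\bD\be_{j+1}$ that makes it a companion matrix, and identifying its last column with the monomial coefficients of $\pl_{N+1}$ through the same orthogonality conditions. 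Your version buys something the paper's argument leaves implicit: it yields $\det(\lambda I-\bK\bD^{-1})=\pl_{N+1}(\lambda)$ exactly, so the eigenvalues coincide with the zeros of $\pl_{N+1}$ as multisets, whereas the paper only shows the containment ``every eigenvalue is a zero'' (the converse there needs a degree count, and the claim $r_N\neq 0$ is asserted rather than derived from the nondegeneracy of the bilinear form). The paper's argument, on the other hand, is shorter and exhibits the eigenvectors explicitly as the coefficient vectors of $\pl_{N+1}(v)/(v-\lambda)$, which is useful elsewhere. Your closing remark about the unit-triangular change of basis to $\{\pl_i\}$ and the tridiagonal three-term-recurrence determinant is a third valid route and is consistent with your normalization $\det(\lambda\bD-\bK)=\bigl(\prod_{i=0}^N\mK_{i,i}\bigr)\pl_{N+1}(\lambda)$.
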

\begin{proof}
  Consider the characteristic polynomial of \eqref{eq:system1d_matrix} 
  \[
  \text{Det}(\bK\bD^{-1}-\lambda) = 0,
  \]
  which can be rewritten as  
  \[
  \text{Det}(\bK-\lambda\bD) = 0.
  \]
  Since $\lambda$ is a eigenvalue, there exists a vector
  $\bm r= (r_0,r_1,\dots,r_N)^T\in\bbR^{N+1}$, satisfying that 
  $(\bK-\lambda\bD)\br=0$, i.e.  
  \[
  \sum_{i=0}^{N}(K_{k,i}-\lambda D_{k,i})r_i = 0,\quad 0\leq k\leq N,
  \]
  which is 
  \[
  \sum_{i=0}^{N}(\mE_{k+i+1}-\lambda\mE_{k+i})r_i = 0 =
  \left\langle v^k(v-\lambda)\sum_{i=0}^{N}r_iv^i
  \tau(\xi)\weight(\xi)\right\rangle,\quad 0\leq k\leq N.
  \]
  Therefore, $(v-\lambda)\sum_{i=0}^{N}r_iv^i$ is orthogonal to
  $v^k$, with respect to $\tau(\xi)\weight(\xi)$, for any $0\leq k\leq N$. 
  Notice $(v-\lambda)\sum_{i=0}^{N}r_i v^i $ is a polynomial of $v$
  whose degree is not greater than $N+1$, 
  thus we have $r_N\neq 0$, and 
  \[
  (v-\lambda)\sum_{i=0}^{N} r_iv^i = r_N\pl_{N+1}.
  \]
 Therefore, $\lambda$ is the root of the orthogonal polynomial
  $\pl_{N+1}$.
\end{proof}
\subsection{Comparison with conventional moment system}
The conventional moment system, derived by postulating an ansatz for
the distribution function to obtain \eqref{eq:oldclosure1d}, can be
formulated as a conservation law, without considering the right hand
side. Due to the conservation, it is a popular way to deduce a
moment system. However, the conventional moment system often suffers
from lack of hyperbolicity. As a globally hyperbolic moment system,
in this section, we will show some properties of 
\eqref{eq:system1d_matrix} in comparison with the conventional moment system. 

First, it is direct to show the conservation of the moments with
orders from $0$ to $N-1$, i.e., we have
\begin{theorem}\label{thm:conservativeness}
In the moment system \eqref{eq:system1d}, the first $N$ equations 
can be written into conservation form. 
\end{theorem}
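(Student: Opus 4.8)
The plan is to read the conservation form off directly from the structure of the closure, without any new computation. Recall that in the unclosed system \eqref{eq:system1d} the $k$-th equation is $\pd{E_k}{t}+\pd{E_{k+1}}{x}=S_k$, and that the closure only needs to supply $\pd{E_{N+1}}{x}$, since $E_{N+1}$ is the unique flux not among the state variables $E_0,\dots,E_N$. For $0\le k\le N-1$ the flux $E_{k+1}$ is itself a component of the state vector $\bm E$, so the corresponding equation is already a conservation law with flux $F_k(\bm E)=E_{k+1}$ and source $S_k$; the only point to verify is that the closed system \eqref{eq:system1d_matrix}, whose flux term is written globally as $\bK\bD^{-1}\pd{\bm E}{x}$, does not alter these first $N$ equations.

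This verification is a one-line matrix identity. Comparing the two matrices in \eqref{eq:matrixDK}, for $0\le i\le N-1$ and $0\le j\le N$ one has $\bK_{i,j}=\mE_{i+j+1}=\bD_{i+1,j}$, i.e. row $i$ of $\bK$ equals row $i+1$ of $\bD$ (these are Hankel matrices with a built-in one-row overlap). Hence, multiplying on the right by $\bD^{-1}$, row $i$ of $\bK\bD^{-1}$ is the $(i+1)$-th row of the identity matrix for each $0\le i\le N-1$, so
\[
\left(\bK\bD^{-1}\pd{\bm E}{x}\right)_i=\pd{E_{i+1}}{x},\qquad 0\le i\le N-1.
\]
(Equivalently, one may argue at the level of the ansatz: the constraints \eqref{eq:constrains} give $\langle\tau(\xi)v^{i+1}\hat g\rangle=\pd{E_{i+1}}{x}$ whenever $i+1\le N$, so the closure $\pd{E_{i+1}}{x}=\langle\tau(\xi)v^{i+1}\hat g\rangle$ reproduces the exact flux derivative in precisely those equations.) Therefore the first $N$ rows of \eqref{eq:system1d_matrix} read $\pd{E_k}{t}+\pd{E_{k+1}}{x}=S_k$ for $0\le k\le N-1$, which is the desired conservation form with flux $F_k(\bm E)=E_{k+1}$; only the last equation fails to be of this trivial shape, since there the closure supplies $\pd{E_{N+1}}{x}=\langle\tau(\xi)v^{N+1}\hat g\rangle$, which in this framework depends on $\pd{\bm E}{x}$ through $\bm g$ and is not the spatial derivative of a function of $\bm E$ alone.

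I do not expect a real obstacle: the statement is purely structural and reduces to the Hankel block overlap between $\bK$ and $\bD$. The only items needing care in the write-up are the index bookkeeping (rows and columns of $\bD,\bK$ run $0,\dots,N$, the flux vector in \eqref{eq:closure_matrix} is indexed $E_1,\dots,E_{N+1}$, while ``the first $N$ equations'' means $k=0,\dots,N-1$), and stating explicitly what ``conservation form'' means here, namely $\pd{E_k}{t}+\pd{F_k(\bm E)}{x}=S_k$ with $F_k$ a function of the conserved variables only, the source terms $S_k$ being allowed on the right-hand side.
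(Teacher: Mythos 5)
Your proposal is correct and rests on the same observation as the paper's own (very short) proof: for $0\le k\le N-1$ the flux $E_{k+1}$ is itself a component of the state vector, so those equations are already conservation laws. The extra step you include --- checking via the Hankel overlap $\bK_{i,j}=\bD_{i+1,j}$ that the quasilinear form \eqref{eq:system1d_matrix} reproduces $\pd{E_{i+1}}{x}$ in its first $N$ rows --- is a correct and worthwhile verification that the paper simply takes for granted.
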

\begin{proof}
  The first $N$ equations, corresponding to the momensts with orders
  from $0$ to $(N-1)$, are
\[ 
\pd{E_k}{t} + \pd{E_{k+1}}{x} = S_k, \quad 0\leq k\leq N-1,
\]
therefore, the governing equations of the moments with orders from $0$
  to $(N-1)$ can be written into conservation form.
\end{proof}
However, compared to the conventional moment system given by \eqref{eq:oldclosure1d} and 
\eqref{eq:constrains_f}, the framework cannot ensure the 
conservation property of the entire system. Nevertheless, we have the following theorem:

\begin{theorem}\label{thm:special_case}
  For the one-dimensional kinetic equation \eqref{eq:kinetic1d}, if we
  use the ansatz 
  \begin{equation}\label{eq:ansatz_mn}
    \hat{f} = h\left(\sum\limits_{i=0}^N \eta_i v^i\right)
  \end{equation}
  to approximate the distribution function $f$ and derive a moment
  system $\mathcal{A}_1$. The resulting system is equivalent to the system \eqref{eq:system1d_matrix}
  by taking the weight function 
  \begin{equation}
    \label{eq:weight_mn}
    \weight = h'\left( \sum_{i=0}^{N}\eta_iv^i \right),
  \end{equation}
  which is denoted as $\mathcal{A}_2$.
  Moreover, the system \eqref{eq:system1d_matrix} in this situation is
  in conservation form. 
\end{theorem}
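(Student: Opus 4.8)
The plan is to recognise that, for the special weight $\weight = h'\!\left(\sum_{i=0}^N\eta_iv^i\right)$, the flux-gradient ansatz $\hat{g}$ built by the new closure is exactly the spatial derivative of the conventional ansatz $\hat{f} = h\!\left(\sum_{i=0}^N\eta_iv^i\right)$; once this is established the two systems must coincide, and the conservativity of $\mathcal{A}_1$ transfers to $\mathcal{A}_2$.

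First I would record what $\mathcal{A}_1$ is. Writing $u := \sum_{i=0}^N\eta_iv^i$, the parameters $\bmeta=(\eta_0,\dots,\eta_N)$ are determined, as functions of $\bm E$ (hence of $(t,x)$), by the constraints $E_k = \langle\tau(\xi)v^kh(u)\rangle$, $0\le k\le N$, and the closing flux is $E_{N+1}=\langle\tau(\xi)v^{N+1}h(u)\rangle$. Since this makes $E_{N+1}$ a function of $E_0,\dots,E_N$ only, $\mathcal{A}_1$ is precisely the balance law $\partial_t\bm E+\partial_x(E_1,\dots,E_{N+1})^T=\bm S$; in particular it is already in conservation form. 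Here I would also make explicit the standing hypotheses needed for the statement to make sense: $h$ is such that $h'>0$ (so $\weight>0$ and, by the remarks before \cref{thm:hyper1d}, $\bD$ is symmetric positive definite), and $\bm E\mapsto\bmeta$ is a differentiable bijection on the admissible set, so that $\partial_x\eta_j$ is meaningful.

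The key step is a chain-rule differentiation in $x$. Since $\hat{f}$ depends on $x$ only through $\bmeta$, we have $\partial_xu=\sum_{j=0}^N v^j\,\partial_x\eta_j$, and therefore
\[
  \partial_x\hat{f} = h'(u)\,\partial_xu = \weight\sum_{j=0}^N(\partial_x\eta_j)\,v^j,
\]
which is exactly an ansatz of the form \eqref{eq:ansatz} with $g_j=\partial_x\eta_j$. Differentiating the constraints $E_k=\langle\tau(\xi)v^kh(u)\rangle$ under the integral sign gives $\partial_xE_k=\langle\tau(\xi)v^k\partial_x\hat{f}\rangle$ for $0\le k\le N$, i.e. this $\bm g$ solves the defining constraints \eqref{eq:constrains}; as $\bD$ is invertible, $\bm g=(\partial_x\eta_0,\dots,\partial_x\eta_N)^T$ is the unique solution of \eqref{eq:constrains_matrix}. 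Equivalently, substituting $\partial_x\hat{f}=\weight\sum_j(\partial_x\eta_j)v^j$ into the moments yields $\partial_x\bm E=\bD\,\bm g$ and $(\partial_xE_1,\dots,\partial_xE_{N+1})^T=\bK\,\bm g$ (using $\bD_{k,j}=\mE_{k+j}$ and $\bK_{k,j}=\mE_{k+j+1}$), so eliminating $\bm g$ reproduces verbatim the closure relation \eqref{eq:closure_matrix}. Hence $\hat{g}=\partial_x\hat{f}$.

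It remains to conclude. The flux-gradient closure of $\mathcal{A}_2$ gives
\[
  \partial_xE_{N+1} = \langle\tau(\xi)v^{N+1}\hat{g}\rangle = \langle\tau(\xi)v^{N+1}\partial_x\hat{f}\rangle = \partial_x\langle\tau(\xi)v^{N+1}h(u)\rangle,
\]
which is exactly the spatial derivative of the closing flux of $\mathcal{A}_1$; the other $N$ equations of the two systems agree trivially. Thus $\mathcal{A}_1$ and $\mathcal{A}_2$ are the same system \eqref{eq:system1d_matrix}, and since $\mathcal{A}_1$ is in conservation form so is \eqref{eq:system1d_matrix} in this situation. I expect the only genuine obstacle to be bookkeeping the implicit hypotheses above — monotonicity of $h$ and the invertibility/differentiability of the realizability map $\bm E\leftrightarrow\bmeta$ — after which the proof is just the displayed differentiations.
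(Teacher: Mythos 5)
Your proposal is correct and rests on the same computation as the paper's proof: the identity $\langle\tau(\xi)v^k\,h'(u)\,v^i\rangle=\mE_{k+i}=\bD_{k,i}$ together with the chain rule. The paper packages this as the matrix identity $\bM=\bK\bD^{-1}$ for the quasilinear form of $\mathcal{A}_1$, while you package it as $\hat{g}=\partial_x\hat{f}$ via uniqueness of the solution of \eqref{eq:constrains_matrix}; these are the same argument in different clothing, and your explicit flagging of the standing hypotheses ($h'>0$, invertibility of $\bm E\mapsto\bmeta$) is a reasonable addition the paper leaves implicit.
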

\begin{proof}
  We write the system derived by taking the ansatz
  \eqref{eq:ansatz_mn} into quasilinear form, formulated as 
  \begin{equation}
    \label{eq:system_mn}
    \pd{\bm{E}}{t} + \bM\pd{\bm{E}}{x} = \bm S,
  \end{equation}
  where the element in the $i$-th row and $j$-th column, which we
  hereafter will refer to as the $(i,j)$-th element of matrix $\bM$, is 
  $\bM_{i,j} = \pd{E_{i+1}}{E_{j}}$. Notice the moment closure
  $E_{N+1}$ is given by 
  \begin{equation}
     E_{N+1} = \langle \tau(\xi)v^{N+1}\hat{f}\rangle = 
     \left\langle \tau(\xi)v^{N+1}h\left( \sum_{i=0}^{N}\eta_iv^i 
     \right)\right\rangle,
  \end{equation}
  and the constraints provided by the first $N$ moments 
  \begin{equation}
     E_{k} = \langle \tau(\xi)v^{k}\hat{f}\rangle = 
     \left\langle \tau(\xi)v^{k}h\left( \sum_{i=0}^{N}\eta_iv^i 
     \right)\right\rangle, \quad 0\leq k\leq N,
  \end{equation}
  since $\tau(\xi)$ is independent 
  with $\eta_i$, thus when $0\leq i\leq N, 0\leq k\leq N+1$,  
  \[   
    \pd{E_k}{\eta_i} = \pd{\left\langle \tau(\xi)v^k
    h\left(\sum_{i=0}^{N}\eta_iv^i\right) \right\rangle}{\eta_i} = 
    \left\langle\tau(\xi)v^kh'\left( \sum_{i=0}^{N}\eta_iv^i
    \right)v^i\right\rangle = \left\langle \tau(\xi)\weight v^{i+k}\right\rangle
    =\mE_{k+i}.
  \]
  Therefore, $\pd{E_k}{\eta_i}$ is the $(k,i)$-th element in $\bD$
  \eqref{eq:matrixDK}, $0\leq i,k\leq N$, 
  and $\pd{\eta_i}{E_k}$ is the $(k,i)$-th
  element of $\bD^{-1}$, since $\bD$ is symmetric and positive
  definite. Furthermore, $\pd{E_{i+1}}{\eta_j}$ is $(i,j)$-th element
  of $\bK$, $0\leq i,j\leq N$. Thus, 
  \[
  \pd{E_{i+1}}{E_j} =
  \sum_{k=0}^{N}\pd{E_{i+1}}{\eta_k}\pd{\eta_k}{E_j} = \sum_{k=0}^{N}
  \bK_{i,k}\bD^{-1}_{k,j} = (\bK\bD^{-1})_{i,j}.
  \]
  Therefore, $\bM= \bK\bD^{-1}$, and the
  system \eqref{eq:system_mn} is equivalent to \eqref{eq:system1d_matrix}.
  
  At last, since \eqref{eq:system_mn} can be written as 
  \[
  \left\langle \tau(\xi)v^k \left(\pd{\hat{f}}{t} +
  v\pd{\hat{f}}{x} \right)
  \right\rangle =
  \left\langle \tau(\xi)v^k\mS \right\rangle, \quad 0\leq k\leq N,
  \]
  \eqref{eq:system_mn} and \eqref{eq:system1d_matrix} are in
  conservation form. 
\end{proof}
Furthermore, when the number of parameters $\bm\eta$ is less than $N$,
we have the following theorem
\begin{theorem}
  \label{thm:maxwelliter}
  For the one-dimensional kinetic equation \eqref{eq:kinetic1d}, if we
  use the weight function 
  \begin{equation}\label{eq:ansatz_mnpn}
     \weightf = h\left(\sum\limits_{i=0}^n \eta_i v^i\right),
  \end{equation}
  and make a weighted polynomial $\hat{f} = \sum_{i=0}^{N}f_iv^i\weightf$
  to approximate the distribution function $f$ and derive a moment
  system $\mathcal{A}_1$, and denote the system
  \eqref{eq:system1d_matrix} by taking the weight function 
  \begin{equation}
    \label{eq:weight_mnpn}
    \weight = h'\left( \sum_{i=0}^{n}\eta_iv^i \right),
  \end{equation}
  as $\mathcal{A}_2$.  Then the results of one-step Maxwellian
  iteration of $\mathcal{A}_1$ and $\mathcal{A}_2$ are the same. 
\end{theorem}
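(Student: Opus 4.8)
The plan is to exploit the fact that the first step of Maxwellian iteration of a first-order moment system $\partial_t\bm E+\bm M(\bm E)\partial_x\bm E=\bm S(\bm E)$ uses only equilibrium data: the equilibrium manifold $\{\bm S=\bm 0\}$, the conservation laws built into the system, the value of the convection term $\bm M(\bm E)\partial_x\bm E$ along that manifold, and the linearization $\partial\bm S/\partial\bm E$ at equilibrium. I would verify that $\mathcal A_1$ and $\mathcal A_2$ share all of these. Throughout I use the standing conventions that both models take $\bm S$ to be the moments of the collision operator evaluated at the \emph{same} reconstruction $\hat f=\sum_{i=0}^Nf_iv^i\weightf$ (the new closure changes only the flux, not the source), and that $\mathcal A_1$ denotes either the conventional moment system associated with $\hat f$ or its globally hyperbolic regularization of \cite{framework,Fan2015}; since only equilibrium quantities enter the argument and the regularization preserves them, these are interchangeable here.

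Rescale the collision term by $1/\epsilon$ and expand $\bm E=\bm E^{(0)}+\bm E^{(1)}+\cdots$ with $\bm E^{(1)}=O(\epsilon)$. The order $\epsilon^{-1}$ balance forces $\bm S(\bm E^{(0)})=\bm 0$, i.e.\ $\hat f^{(0)}=\weightf=h(\sum_{i=0}^n\eta_iv^i)$ with $\bm\eta$ fixed by the lower moments, so $E_k^{(0)}=\langle\tau v^k\weightf\rangle$ for all $0\le k\le N$; since $\mathcal A_2$ is built from the weight $\weight=h'(\sum_{i=0}^n\eta_iv^i)$ with the same functions $\bm\eta(\bm E)$ and the same $\bm S$, its equilibrium manifold and its matrix $\partial\bm S/\partial\bm E$ at equilibrium coincide with those of $\mathcal A_1$. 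At order $\epsilon^0$ each non-conservative equation becomes $\partial_tE_k^{(0)}+\big(\bm M(\bm E^{(0)})\partial_x\bm E^{(0)}\big)_k=\tfrac1\epsilon\big(\partial\bm S/\partial\bm E|_{\mathrm{eq}}\big)_k\bm E^{(1)}$, and $\partial_tE_k^{(0)}$ is eliminated via the first $N$ equations, which in both models are the exact conservation laws $\partial_tE_k+\partial_xE_{k+1}=S_k$ (\cref{thm:conservativeness}) with equilibrium fluxes $\langle\tau v^{k+1}\weightf\rangle$. For $0\le k\le N-1$ the convection term equals $\partial_xE_{k+1}^{(0)}$ in both models (in $\mathcal A_2$ this is \eqref{eq:closure_matrix}), so the first-order problems for $\bm E^{(1)}$ can differ only through the convection term of the $N$-th equation.

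Evaluated at $\bm E^{(0)}$, that term is $\partial_x\langle\tau v^{N+1}\hat f^{(0)}\rangle=\langle\tau v^{N+1}\,\partial_x\weightf\rangle$ for $\mathcal A_1$, and $\big(\bK\bD^{-1}\partial_x\bm E^{(0)}\big)_N=\langle\tau v^{N+1}\hat g^{(0)}\rangle$ for $\mathcal A_2$, where by \eqref{eq:constrains_matrix} $\hat g^{(0)}=\weight\sum_{i=0}^Ng_i^{(0)}v^i$ is the unique weighted polynomial with $\langle\tau v^k\hat g^{(0)}\rangle=\partial_xE_k^{(0)}$ for $0\le k\le N$. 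The point is the identity $\hat g^{(0)}=\partial_x\weightf$, and this is exactly where $\weight=h'$ is used: $\partial_x\weightf=h'(\sum_{i=0}^n\eta_iv^i)\sum_{i=0}^nv^i\,\partial_x\eta_i=\weight\cdot(\text{a polynomial in }v\text{ of degree}\le n\le N)$ lies in the ansatz space $\{\weight\sum_{i=0}^Ng_iv^i\}$, and $\langle\tau v^k\,\partial_x\weightf\rangle=\partial_x\langle\tau v^k\weightf\rangle=\partial_xE_k^{(0)}$ for every $0\le k\le N$; invertibility of $\bD$ then makes $\partial_x\weightf$ the unique such weighted polynomial, so $\hat g^{(0)}=\partial_x\weightf$. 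Hence the two $N$-th convection terms agree at $\bm E^{(0)}$, $\bm E^{(1)}$ solves the same linear system for both models, and substituting it into the common conservation laws yields the same closed dissipative system (the same Navier--Stokes--Fourier law in the Boltzmann case).

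The step I expect to be the main obstacle is the opening reduction: carefully fixing a gauge for the parametrization of $\mathcal A_1$, aligning it with the conserved/non-conserved index split dictated by $\bm S$, and checking that the order $\epsilon^0$ equations for $\bm E^{(1)}$ are literally the same linear system in both models --- together with confirming that for the regularized version of $\mathcal A_1$ the regularization terms do not contribute at this order (equivalently, that $\bm M_{\mathcal A_1}$ is unchanged at equilibrium by the regularization). The identity $\hat g^{(0)}=\partial_x\weightf$ itself, which is the substantive new content, is short once the setup is in place.
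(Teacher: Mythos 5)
Your proposal is correct and takes essentially the same route as the paper: both proofs reduce the claim to the statement that the convection terms of $\mathcal{A}_1$ and $\mathcal{A}_2$ coincide when evaluated at the zeroth iterate $\hat f^{(0)}=\weightf$, where only the $\bm\eta$-directions of the spatial gradient survive. Your key identity $\hat g^{(0)}=\partial_x\weightf$ is precisely the paper's matrix computation $(\bK\bD^{-1}\bA)_{m,l}\big|_{\bm\eta^*=0}=\bK_{m,l}=\partial_{\eta_l}(\bKn\bDn^{-1}\bm E)_m\big|_{\bm\eta^*=0}$ read at the level of the distribution function, and the obstacles you flag at the end (gauge-fixing the parametrization, regularization terms in $\mathcal{A}_1$) do not arise because the paper's $\mathcal{A}_1$ is simply the conventional conservative system $\partial_t\bm E+\partial_x(\bKn\bDn^{-1}\bm E)=\bm S$.
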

\begin{proof}
  We still use the notation in \eqref{eq:matrixDK}, and
  $\mathcal{A}_2$ can be written as 
  \[  
  \pd{\bm{E}}{t} + \bK\bD^{-1}\pd{\bm{E}}{x} = \bm{S}.
  \]
  On the other hand, denote $\mEn_n=\langle \tau(\xi)v^{n}\weightf\rangle$, and $
  \bDn$ as $\bDn_{i,j} = \mEn_{i+j}$,
  and $\bKn$ as 
  $\bKn_{i,j} = \mEn_{i+j+1}$, $0\leq i,j\leq N$, then
  $\mathcal{A}_1$ can be written as 
  \[
  \pd{\bm{E}}{t} + \pd{(\bKn\bDn^{-1}\bm{E})}{x} = \bm{S}.
  \]
  Furthermore, we can rewrite these two systems by a variable
  $\bw=(\bm\eta,\bm\eta^*)\in\bbR^{N+1}$, where there is an one-to-one map
  between $\bw$ and $\bm{E}$. Additionally, we assume that when
  $\hat{f}=\weightf$, $E_k=\mEn_k$, and  $\bm\eta^*=0$. 
  Without loss of generality, we assume that 
  $\bA := \pd{\bm E}{\bw} $ is invertible. 

  Then $\mathcal{A}_1$ and $\mathcal{A}_2$ can be respectively written
  as 
  \[
    \bA\pd{\bw}{t} + \pd{(\bKn\bDn^{-1}\bm E)}{\bw}\pd{\bw}{x} = \bm{S},\quad 
    \bA\pd{\bw}{t} + \bK\bD^{-1}\bA\pd{\bw}{x} = \bm{S},\quad 
  \]
  In order to prove that the one-step Maxwellian iteration of these
  two systems are equivalent, one only needs to prove that when
  $\hat{f} = \weightf$, i.e. $E_k = \mEn_k$ for $0\leq k\leq N$,  
  \begin{equation} 
    \label{eq:thm_maxwell_target}
    (\bK\bD^{-1}\bA)_{m,l}\Big|_{\bm\eta^*=0} = \pd{(\bKn\bDn^{-1}\bm{E})_m}{\eta_l}
    \Big|_{\bm\eta^*=0},\quad
  0\leq m\leq N,
  0\leq l\leq n.
\end{equation}

  The right hand side can be calculated as 
  \begin{equation}
    \label{eq:linear_cal1}
  \begin{aligned}
  \pd{(\sum_{i,j=0}^{N}\bKn_{m,i}\bDn^{-1}_{i,j}E_j)}{\eta_l} \Big|_{\bm\eta^*=0}
  =& 
  \pd{[(\sum_{i,j=0}^{N}\bKn_{m,i}\bDn^{-1}_{i,j}E_j)\big|_{\bm\eta^*=0}]}{\eta_l} 
  = \pd{\bKn_{m,0}}{\eta_l} = \bK_{m,l}
  \end{aligned} 
\end{equation}
  where when $\hat{f}=\weightf$, $E_j=\mEn_{j}=\bDn_{j,0}$, thus
  \[
  \left(\sum_{j=0}^{N}\bDn^{-1}_{i,j}E_j\right)\Big|_{\bm\eta^*=0} =
  \begin{cases}
    0,& i\neq 0,\\
    1, & i =0,
  \end{cases}
  \]
  and 
  \[  
     \pd{\bKn_{i,0}}{\eta_l} = \bK_{i,l},
     \quad 
     \pd{\bDn_{i,0}}{\eta_l} = \bD_{i,l},
     \quad 0\leq i\leq N,0\leq l\leq n.
  \]
  are applied. On the other hand, notice that 
  \[
  \bA_{i,l}|_{\bm\eta^*=0} = \pd{E_i}{\eta_l}\Big|_{\bm\eta^*=0} = 
  \pd{\bDn_{i,0}}{\eta_l}\Big|_{\bm\eta^*=0} = \bD_{i,l}|_{\bm\eta^*=0}. 
  \]
  Therefore, 
  \begin{equation} 
    (\bK\bD^{-1}\bA)_{m,l}\Big|_{\bm\eta^*=0} = \bK_{m,l}, 
  0\leq m\leq N,
  0\leq l\leq n,
\end{equation}
  thus \eqref{eq:thm_maxwell_target} holds.
\end{proof}
\begin{remark}
  The result of Maxwellian iteration for many kinetic equations and their 
  moment models is important. For instance, the one-step
  Maxwellian iteration of Grad's 13-moment equation leads to
  Navier-Stokes-Fourier law \cite{Struchtrup,di2016quantum}. However, as pointed 
  out in \cite{di2016quantum, HMPN}, a direct application of the hyperbolic regularization in 
  \cite{Fan2015, framework} may change the result of Maxwellian iteration. According to 
  above theorem, the moment system \eqref{eq:system1d_matrix} fixes this defect.
\end{remark}



\section{Applications in one-dimensional case}\label{sec:example1d}
In this section, we first list some existing hyperbolic moment models, some
of which can be regarded as special cases of our framework, i.e., they
can be written as the form in 
\eqref{eq:system1d_matrix} by taking a
proper weight function $\weight$. Then we will 
also propose some novel hyperbolic moment models based on our new framework.

In order to put moment models into our framework, we need to prove
they are equivalent to \eqref{eq:system1d_matrix}.
Apart from \cref{thm:special_case}, 
in order to judge whether two moment models are the same, we introduce this lemma. 
\begin{lemma}
  \label{lem:equality}
  If two one-dimensional $N$-th order moment system with $N+1$ equations, denoted as
  system $\mathcal{A}_1$ and $\mathcal{A}_2$, satisfy these two conditions:
  \begin{enumerate}
    \item 
  For each system, the governing equations of the moments with orders
      from $0$ to $(N-1)$ can be written in conservation form. In other words, 
  the first $N$ equations are equivalent to 
  \[
    \pd{E_k}{t} + \pd{E_{k+1}}{x} = S_k,\quad 0\leq k\leq N-1.
  \]
\item The characteristic polynomial of $\mathcal{A}_1$ is the same as
  the characteristic polynomial of $\mathcal{A}_2$.
  \end{enumerate}
  Then these two moment models are equivalent. 
\end{lemma}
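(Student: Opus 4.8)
The plan is to compare the two systems through their quasilinear coefficient matrices, and to exploit the rigid structure that the first condition forces on these matrices. I would write each system $\mathcal{A}_i$, $i=1,2$, in quasilinear form
\[
  \pd{\bm E}{t} + \bM^{(i)}\pd{\bm E}{x} = \bm S,\qquad \bm E = (E_0,E_1,\dots,E_N)^T,
\]
which is possible since both are $N$-th order moment systems closed in terms of $E_0,\dots,E_N$ (for a conventional closure one has $\bM^{(i)}_{k,j}=\partial E_{k+1}/\partial E_j$, and for the flux-gradient closure $\bM^{(i)}=\bK\bD^{-1}$). Condition~1 says that for $0\le k\le N-1$ the $k$-th equation of each system is exactly $\partial_t E_k + \partial_x E_{k+1} = S_k$, which in quasilinear form means $\bM^{(i)}_{k,j}=\delta_{j,k+1}$ for $0\le k\le N-1$. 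Hence the first $N$ rows of $\bM^{(1)}$ and $\bM^{(2)}$ coincide and are the ``shift'' rows, so the two matrices can differ only in their last row, say $(a_0,\dots,a_N)$ for $\bM^{(1)}$ and $(b_0,\dots,b_N)$ for $\bM^{(2)}$; also, both systems carry the same source $\bm S$, again by Condition~1.

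Next I would observe that a matrix whose first $N$ rows are the shift rows is precisely the companion matrix of its last row: a routine cofactor expansion (equivalently, the standard formula for the characteristic polynomial of a companion matrix) gives
\[
  \det(\lambda I - \bM^{(i)}) = \lambda^{N+1} - c^{(i)}_N\lambda^N - \dots - c^{(i)}_1\lambda - c^{(i)}_0,
\]
where $(c^{(i)}_0,\dots,c^{(i)}_N)$ is the last row of $\bM^{(i)}$. Thus the characteristic polynomial reads off the last row coefficient by coefficient. Condition~2 states that these two polynomials are identical (as polynomials in $\lambda$ whose coefficients are functions of $\bm E$), so matching the coefficients of $\lambda^0,\dots,\lambda^N$ forces $a_k=b_k$ for all $0\le k\le N$. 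Therefore the last rows agree as well, so $\bM^{(1)}=\bM^{(2)}$, and since the sources coincide, $\mathcal{A}_1$ and $\mathcal{A}_2$ are literally the same quasilinear system in the variables $E_0,\dots,E_N$, i.e. they are equivalent.

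The only genuine content — and the step I would be most careful about — is the reduction in the first paragraph: that Condition~1 really pins the first $N$ rows down to be exactly the shift rows, so that the two systems can differ at most in the closure of the top moment, together with the observation that a companion-type matrix is uniquely determined by its characteristic polynomial. Everything else is bookkeeping; in particular I would make explicit that here ``equivalent'' means the same quasilinear system written in the common moment variables, so that equality of the coefficient matrices is exactly what has to be shown, and that the quasilinear (coefficient-matrix) description of each $\mathcal{A}_i$ exists in the first place.
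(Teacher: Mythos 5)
Your proposal follows essentially the same route as the paper: both write the two systems in quasilinear form, use Condition~1 to force the first $N$ rows of the coefficient matrices into the shift (companion) structure, and then read the last row off the characteristic polynomial via Condition~2 to conclude the matrices coincide. Your version is slightly more explicit about why the companion structure makes the characteristic polynomial determine the last row, but the argument is the same.
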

\begin{proof}
  According to the first condition, the moment system can be written
  as these quasi-linear forms
  \[ 
  \pd{\bm{E}}{t} + {\bf A}_1\pd{\bm{E}}{x} = \bm{S}
  \text{  and  }\quad  
  \pd{\bm{E}}{t} + {\bf A}_2\pd{\bm{E}}{x} = \bm{S},
  \]
  where $\bm{E} = (E_0,E_1,\cdots,E_N)^T\in\bbR^{N+1}$, and $\bm{S} =
  (S_0,S_1,\cdots,S_N)^T\in\bbR^{N+1}$. The coefficient matrix ${\bf
  A}_1$ and ${\bf A}_2$ can be written as 
  \[ 
    {\bf A}_1 = \left[
    \begin{array}[H]{cccccc}
      0&1&0&0&\cdots&0\\
      0&0&1&0&\cdots&0\\
      0&0&0&1&\cdots&0\\
      \vdots&\vdots&\vdots&\vdots&\ddots&\vdots\\
      0&0&0&0&\cdots&1\\
      c_{1,0}&c_{1,1}&c_{1,2}&c_{1,3}&\cdots&c_{1,N}
    \end{array}
    \right]
    \text{ and }\quad 
    {\bf A}_2 = \left[
    \begin{array}[H]{cccccc}
      0&1&0&0&\cdots&0\\
      0&0&1&0&\cdots&0\\
      0&0&0&1&\cdots&0\\
      \vdots&\vdots&\vdots&\vdots&\ddots&\vdots\\
      0&0&0&0&\cdots&1\\
      c_{2,0}&c_{2,1}&c_{2,2}&c_{2,3}&\cdots&c_{2,N}
    \end{array}
    \right].
  \]
  Therefore, the characteristic polynomials are 
  \[
  p_1(\lambda) = \lambda^{N+1} - \sum_{i=0}^{N} c_{1,i}\lambda^i
  \text{ and }\quad 
  p_2(\lambda) = \lambda^{N+1} - \sum_{i=0}^{N} c_{2,i}\lambda^i.
  \]
  Noticing that $p_1(\lambda)=p_2(\lambda)$, we have
  $c_{1,i}=c_{2,i}$,$i=0,1,\cdots,N$. Thus we have 
  \[
  {\bf A}_1 = {\bf A}_2,
  \]
  thus $\mathcal{A}_1$ and $\mathcal{A}_2$ are equivalent.
\end{proof}

\subsection{Boltzmann equation} \label{sec:Boltzmann1d}
This section considers the one-dimensional Boltzmann equation. The
Boltzmann equation depicts the movement of the gas molecules from a
statistical point of view \cite{Boltzmann,Struchtrup}. It reads  
\begin{equation}
  \label{eq:Boltzmann1d}
  \pd{f}{t} + \xi \pd{f}{x} = \mS(f), 
\end{equation}
where $f=f(t,x,\xi)$ is the distribution function, 
$\xi\in \bbR$ is the microscopic velocity, and the right-hand
side $\mS$ is the collision term, 
depicting the interaction between gas particles.  

The thermodynamic equilibrium is 
\begin{equation}\label{eq:boltzmann1d_feq}
  f_{eq} = \dfrac{\rho}{\sqrt{2\pi\theta}}\exp\left(
  -\dfrac{|\xi-u|^2}{2\theta} \right), 
\end{equation}
where $\rho$, $u$ and $\theta$ are macroscopic variables, satisfying
that $\rho = \langle f\rangle$, $\rho u=\langle \xi f\rangle$ and
$\rho\theta=\langle |\xi-u|^2 f\rangle$.

Furthermore, the commonly used definition of the moments are 
\[  
  E_k = \langle \xi^kf\rangle, \quad k\geq 0.
\]
Compared with \eqref{eq:kinetic1d}, we have $v(\xi)=\xi$ and
$\tau(\xi)=1$.

\subsubsection{One-dimensional HME model} \label{sec:HME1d}
To derive a reduced model from \eqref{eq:Boltzmann1d},
Grad \cite{Grad} proposed the moment method, in which a weighted
polynomial is applied to approximate the distribution function, with
the weight function set as $f_{eq}$, i.e. 
\begin{equation}\label{eq:weight_hme}
  \weightf = \dfrac{\rho}{\sqrt{2 \pi \theta}}
  \exp\left(-\frac1{2\theta}(\xi-u)^2\right).
\end{equation}
  By taking 
  $\bm\eta=\left(\ln\left(\dfrac{\rho}{\sqrt{2\pi\theta}}\right)
-\dfrac{u^2}{2\theta}, \dfrac{u}{\theta},-\dfrac{1}{2\theta}\right)^T$, 
  The weight function \eqref{eq:weight_hme} can be written as the form
  \eqref{eq:ansatz_mnpn}, with $h(\zeta)=\exp(\zeta)$. 

However, in \cite{Muller,Grad13toR13}, 
it was pointed out the Grad's
moment method is not globally hyperbolic, which limits the
applications of the Grad's moment method. Therefore,
the hyperbolic moment equation \cite{Fan,Fan_new} is proposed as 
the result of a globally hyperbolic
regularization of Grad's moment system.

With respect to the weight function \eqref{eq:weight_hme}, the
orthogonal polynomial is the generalized Hermite polynomial,
formulated as 
\begin{equation}
  \label{eq:He1d}
\He_{k}(\xi) = (-1)^k
\exp\left( -\frac{(\xi-u)^2}{2\theta} \right)
\od{^k}{\xi^k} 
\exp\left( -\frac{(\xi-u)^2}{2\theta} \right),
\end{equation}
it is not difficult to check that 
\[  
\He_k(\xi) = \theta^{-k/2}\text{He}_k\left( \dfrac{\xi-u}{\sqrt{\theta}}
\right),
\]
where $\text{He}_k$ is the $k$-th order Hermite polynomial. Therefore,
the zeros of $\He_k$ are  $u+c_i\sqrt{\theta}$, where $c_i$ are zeros
of Hermite polynomial $\text{He}_k$, $1\leq i\leq k$. In other words,
when we take the weight function as \eqref{eq:weight_hme} in our
framework, the eigenvalues of the moment system
\eqref{eq:system1d_matrix} are the zeros of $\He_{N+1}$. 

On the other hand,
according to the discussions in \cite{Fan_new}, the governing equation
of the moments with orders from $0$ to $N-1$ (i.e. $E_{k}$, $k\leq N-1$) in HME system can
be written in conservation form, and the eigenvalues of the HME
system is the zeros of $\He_{N+1}$.  According to
\cref{lem:equality}, HME system can be put into our framework, by
taking the weight function as 
\[  
   \weight = h'\left(\sum_{i=0}^{n}\eta_i\xi^i\right)
   =\exp\left(\sum_{i=0}^{n}\eta_i\xi^i\right)
   = \dfrac{\rho}{\sqrt{2 \pi \theta}}
  \exp\left(-\frac1{2\theta}(\xi-u)^2\right).
\]

\subsubsection{Maximum entropy model}
The maximum entropy model \cite{levermore1996moment} 
makes use of the following ansatz for the
distribution function:
\begin{equation}\label{eq:mep_ansatz}
  \hat{f}(t,x,\xi) = \exp\left(\sum\limits_{i=0}^N \eta_i(t,x) \xi^i
  \right),
\end{equation}
which corresponds to $h(\zeta) = \exp(\zeta)$ in \cref{thm:special_case}.
Therefore, according to \cref{thm:special_case}, take the weight
function as 
\[  
\weight = h'\left( \sum_{i=0}^{N} \eta_i\xi^i \right) = 
\exp\left( \sum_{i=0}^{N}\eta_i\xi^i \right),
\]
and we know the two moment models are the same.

\subsection{Radiative transfer equation} \label{sec:RTE1d}
This section considers the radiative transfer equation (RTE) in slab geometry 
\begin{equation}
  \dfrac{1}{c}\pd{f}{t} + \xi \pd{f}{x} = \mS(f, T),
\end{equation}
where $c$ is the speed of light, $\xi \in [-1,1]$ denotes the cosine
of the angle between the photon velocity direction and the $x$ axis,
and the right-hand side $\mS$ \cite{Bru02, McClarren2008Semi} 
depicts the interaction between photons
and the background medium which has material temperature $T$.

Furthermore, the commonly used definition of the moments are 
\[  
  E_k = \langle \xi^kf\rangle, \quad k\geq 0.
\]
Compared with \eqref{eq:kinetic1d}, we have $v(\xi)=\xi$ and
$\tau(\xi)=1$. 

\subsubsection{\PN model}
The \PN model \cite{jeans1917stars} 
suggests to approximate the distribution function by a polynomial,
i.e.
\begin{equation}
  \hat{f}(t,x,\xi) = \sum\limits_{i = 0}^N
  \eta_i(t,x)\xi^i,
\end{equation}
which corresponds to \cref{thm:special_case} with $h(\zeta) = \zeta$.
Therefore, take the weight function as 
\[
\weight = 1,
\]
then according to \cref{thm:special_case}, we know that the moment
system derived by our framework is the same as the \PN model.

\subsubsection{\MN model}\label{sec:MNofRTE1d}
To derive the \MN model \cite{minerbo1978maximum,levermore1996moment,
dubroca1999theoretical}
for gray RTE, one uses the following ansatz for the
distribution function:
\begin{equation}\label{eq:mn_ansatz}
  \hat{f}(t,x,\xi) = \left[\sum\limits_{i=0}^N \eta_i(t,x) \xi^i
  \right]^{-4},
\end{equation}
which corresponds to $h(\zeta) = \dfrac{1}{\zeta^4}$ in
\cref{thm:special_case}.

On the other hand, for the \MN model for monochromatic radiative transfer, 
the ansatz to approximate the distribution function is given by 
\begin{equation}
  \hat{f}(t,x,\xi) = \left[\exp\left(\sum\limits_{i=0}^N \eta_i(t,x) \xi^i\right) -
  1\right]^{-1},
\end{equation}
which corresponds to $h(\zeta) = \dfrac{1}{\exp(\zeta)-1}$ in
\cref{thm:special_case}.

Therefore, according to \cref{thm:special_case}, we take the weight function 
for the grey RTE as 
\[ 
  \weight = \dfrac{1}{\left(\sum_{i=0}^{N}\eta_i\xi^i\right)^5},
\]
and the weight function for the monochromatic RTE is 
\[
\weight = \dfrac{\exp\left(\sum_{i=0}^{N}\eta_i\xi^i\right)}{
\left(\exp\left(\sum_{i=0}^{N}\eta_i\xi^i \right)-1\right)^2}.
\]
Then the \MN model for the grey RTE and the monochromatic RTE can also
regarded as special cases of our new framework
\eqref{eq:system1d_matrix}.

\subsubsection{\HMPN model}
In \cite{MPN}, the researchers suggest to use a weighted polynomial
to approximate the specific intensity $f$, where the weight function
is given by the ansatz of the \Mone model in grey medium, i.e. 
\[ 
 \hat{f} = \dfrac{1}{\left( \eta_0+\eta_1\xi \right)^4}
\sum_{i=0}^{N}f_i\xi^i,
\]
where $\eta_0,\eta_1$ are determined by the moments
$E_0$ and $E_1$, and 
$\dfrac{\eta_1}{\eta_0}\in(-1,1)$. 

Furthermore, in \cite{HMPN}, it was pointed out that the \MPN model is
not globally hyperbolic for $N\geq 3$, thus a hyperbolic
regularization is needed. 
Direct application of the framework proposed in \cite{framework,
Fan2015} results in a globally hyperbolic moment system. However, this
system has other physical defects. When $N=1$, this system is
different from the \Mone model. Moreover, this system's 
one-step Maxwellian changes the result of the \MPN model. Therefore,
in \cite{HMPN} a new hyperbolic regularization was proposed to fix these
defects, and the resulting system is \HMPN model. 
Now we claim that the \HMPN model can also be regarded as an
example of our new framework. 

Noticing that in \MPN model, we use $\weightf = 1/(\eta_0+\eta_1\xi)^4$ as the weight
function to approximate the specific intensity $f$. Thus it is natural
to use its derivative to approximate $\pd{f}{x}$, i.e. the weight
function is taken as 
\begin{equation}
  \label{eq:weight_hmpn1d}
\weight = \dfrac{1}{\left(\eta_0+\eta_1\xi \right)^5}.
\end{equation}
Actually, the weight function is the same as the weight function we used in the \MN model in
\cref{sec:MNofRTE1d}, when $N = 1$.

According to \cref{lem:equality}, one only need to check 
\begin{enumerate}
  \item According to \cite{HMPN}, the governing equations of the
    moments with orders from $0$ to $N-1$ can be written into conservative form.
  \item According to \cite{HMPN}, the characteristic speed of the
    \HMPN model is the zeros of the $(N+1)$-th orthogonal polynomial
    with respect to the weight function \eqref{eq:weight_hmpn1d}.
\end{enumerate}
Therefore, we have the \HMPN model \cite{HMPN} is also a special case
of our framework, when the weight function is taken as
\eqref{eq:weight_hmpn1d}.
According to \cref{thm:special_case}, we have that the moment
system \eqref{eq:system1d_matrix} is the same as the \Mone model. 
According to \cref{thm:maxwelliter}, the Maxwellian iteration of 
the \HMPN model is the same as the \MPN model.

\subsubsection{\HMPN model for monochromatic case}
Since the \HMPN model for grey medium leads to a globally hyperbolic
models which also preserves nice physical properties \cite{HMPN}, it is natural to propose
the \HMPN model by taking the weight function as the monochromatic
\Mone model, i.e. 
\begin{equation}
  \weightf = \dfrac{1}{\exp(\eta_0+\eta_1\xi)-1},
\end{equation}
which corresponds to \eqref{eq:weight_mnpn} with $h(\zeta) =
\frac{1}{\exp(\zeta)-1}$. 

Taking the weight function as 
\begin{equation}
  \weight =
  \dfrac{\exp(\eta_0+\eta_1\xi)}{(\exp(\eta_0+\eta_1\xi)-1)^2},
\end{equation}
one can propose a globally hyperbolic
moment system with the first $(N-1)$-order moments in conservation
form. Moreover,
according to \cref{thm:eigen1d}, we have that the characteristic
speeds of the moment system are not greater than the speed of light.
Moreover, according to \cref{thm:special_case}, when $N=1$, the
resulting system is the \Mone model. According to
\cref{thm:maxwelliter}, the resulting system preserves the result of
the Maxwellian iteration the \MPN model for monochromatic case.

\subsection{Boltzmann-Peierls equation}
The one-dimensional Boltzmann-Peierls equation \cite{peierls1997kinetic}, which characterizes
phonon transport, reads
\begin{equation}
  \pd{f}{t} + v(\xi) \pd{f}{x} = \mS(f),
\end{equation}
where $v(\xi) = \od{w}{\xi}$, $w(\xi)$ is the dispersion relationship.
For instance, if we only consider harmonic interaction between
adjacent atoms \cite{guo2015phonon}, 
\begin{equation} w(\xi) = 2
  \sqrt{\dfrac{K}{m}}\left|\sin\dfrac{\xi a}{2}\right|, 
\end{equation}
where $\xi$ is the wave number of lattice vibration, while $K$, $m$
and $a$ are constants. Usually, $\xi
\in \mB = [-\pi/a, \pi/a)$ is from the first Brillouin zone. $\mS(f)$
depicts the phonon scattering process.

As a high order extension of the approach in \cite{banach2004nine}, 
the moments are defined as 
\[  
  E_k = \langle w v^kf\rangle,
\]
i.e. $\tau(\xi) = w(\xi)$ in this case. $E_0$ is proportional to
the energy density, and $E_1$ proportional to the heat flux.
We consider only the single mode relaxation time approximation
\cite{peraud2011efficient}, then equilibrium distribution function is
\begin{equation}
  f_{eq} = \dfrac{1}{\exp(\hbar w(\xi) / (k_B T)) - 1},
\end{equation}
where $\hbar$ is the reduced Planck constant, $k_B$ is the Boltzmann constant, and $T$ is 
temperature defined by energy conservation
\begin{equation}
  \left\langle \hbar w \dfrac{f_{eq} - f}{\tilde{\tau}} \right\rangle = 0,
\end{equation}
where $\tilde{\tau}$ is relaxation time.

A natural idea is to use a weighted polynomial to approximate the
distribution function $f$, with the weight function taken as the
equilibrium $f_{eq}$. Some recent works can be found in \cite{banach2004nine,
banach2004nine2}.

Using the conventional moment method, the weight function is taken as 
\[  
\weightf = \dfrac{1}{\exp\left( \eta_0 w(\xi) \right) - 1},
\]
and the ansatz is given by 
\[
  \hat{f} = \weightf \sum_{i=0}^{N} f_i v^i.
\]
The resulting system is then formulated as 
\begin{equation}\label{eq:trad_bp_1d} 
  \pd{\bm{E}}{t} + \bM\pd{\bm{E}}{x} = \bm{S},
\end{equation}
with $\bM_{i,j} = \pd{E_{i+1}}{E_{j}}$, and $E_{N+1}$ is given by 
\[ 
  E_{N+1} = \langle w(\xi)\hat{f}v^{N+1}\rangle. 
\]

Under our new framework, the derivative $\pd{f}{x}$ can be
approximated by a weighted polynomial, with 
\begin{equation}
  \weight = [\exp(\eta_0 w(\xi)) - 1]^{-2} \exp(\eta_0w(\xi)) w(\xi).
\end{equation}
Precisely, the ansatz is formulated as 
\begin{equation}
  \pd{f}{x} \approx \hat{g} = \weight \sum\limits_{i=0}^N g_i v^i.
\end{equation}
According to \cref{thm:hyper1d}, the moment system is globally hyperbolic. 
On the Maxwellian iteration, notice that 
\[ 
\pd{\mEn_i}{\eta_0} = 
\pd{\langle w(\xi)\weightf v^i \rangle}{\eta_0} =
\langle  w(\xi)\pd{\weightf}{\eta_0} v^i\rangle 
= \langle w(\xi)\weight v^i\rangle = \mE_i,
\]
and according to proof of \cref{thm:maxwelliter}, the moment system of
the new framework has the same result of Maxwellian iteration with the
conventional moment system \eqref{eq:trad_bp_1d}. 



\section{Model in multi-dimensional case}\label{sec:modelmultid}
This section considers the D-dimensional kinetic equation, which has the form
\begin{equation}
  \label{eq:kineticmultid}
  \pd{f}{t} + \bv(\bxi)\cdot\nabla_{\bx}f = \mathcal{S}(f),\quad
  t\in\bbR^+,\bx\in\bbR^D,\bv(\bxi)\in\bbR^D,\bxi\in\bbG\subset\bbR^D,
\end{equation}
where the distribution function $f=f(t,\bx,\bxi)$, and
$\bv(\bxi)\in\bbR^D$ is a function of $\bxi$. 
The $k$-th moment is defined by 
\begin{equation}\label{eq:moment_multid}
 E_{k}(t,\bx):=\langle\tau(\bxi)\psi_{k}(\bv) f \rangle,\quad  0\leq k\leq
 M-1,
 \end{equation}
where $\psi_k$ are chosen polynomials of $v$, $M$ is the number of
moments considered in the moment system, and $\tau(\bxi)>0$.

In different moment models, the basis function $\psi_k$ could be
different. For example, in Grad's 13-moment model,
$D=3$, $M=13$,
$\tau(\bxi)=1$, $\bv(\bxi)=\bxi$,  
and the sequence of $\psi_k$ is 
\begin{equation}  \label{eq:base_13m}
\bm{\psi} =
(1,\xi_1,\xi_2,\xi_3,\xi_1^2,\xi_1\xi_2,\xi_1\xi_3,\xi_2^2,\xi_2\xi_3,\xi_3^2,
\Vert\bxi\Vert^2\xi_1,
\Vert\bxi\Vert^2\xi_2,
\Vert\bxi\Vert^2\xi_3)^T.
\end{equation}
In the $N$-th order HME model \cite{Fan_new}, we have $M=\binom{N+D}{D}$,
$\tau(\bxi)=1$, $\bv(\bxi)=\bxi$, and
$\psi_{\mN(\alpha)} = \bxi^{\alpha}$, where $\bxi^{\alpha} =
\prod_{d=1}^{D}\xi_d^{\alpha_d}$, and
$\mN(\cdot)$ is a map
from the multi-index group \( \{\alpha\in\bbN^D:|\alpha|\leq N \} \)
to the set $\{0,1,2,\cdots,M-1\}$. 

As $\langle \tau(\bxi)f\rangle$ often corresponds to density,
hereafter we assume that $\psi_0=1$. 

A moment system can be written as 
\begin{equation}
  \label{eq:systemmulti-d}
   \pd{E_{k}}{t} + \sum\limits_{d=1}^D
   \pd{\langle\tau(\bxi)v_d\psi_k f\rangle}{x_d}
   = S_{k},
   \quad 0\leq k\leq M-1,
 \end{equation}
where 
\[
  S_{k} := \langle \tau(\bxi) \psi_k\mathcal{S}\rangle, 
\]
It is obvious that the moment
system \eqref{eq:systemmulti-d} is not closed. Thus one has to seek a
moment closure. Similar as the 1-D case, we define the
moment closure
$\pd{\langle\tau(\bxi)v_d\psi_k f\rangle}{x_d}$, $d=1,2,\cdots,D$,
by imposing an ansatz on each $\pd{f}{x_d}$.

\subsection{Model deduction}
Similar to 1-D case, we approximate $\pd{f}{x_d}$ by a weighted
polynomial $\hat{g}_d$, $d=1,2,\cdots,D$.
We take a positive weight function $\weight$, where $\bm\eta$ depends
on $E_{k}$, $0\leq k\leq M-1$. Thus for different directions $x_d$,
their weight function is the same.
The ansatz is then given by 
\begin{equation}
  \label{eq:ansatz_multi}
  \hat{g}_d = \weight\sum_{i=0}^{M-1}g_{d,i}\psi_i,\quad
  1\leq d\leq D.
\end{equation}

Denote $\mE_{k} = \langle
\tau(\bxi)\psi_k\weight\rangle$ is the $k$-th moment of
weight function $\weight$. Let matrix $\bD\in\bbR^{M\times M}$
and $\bK_d \in\bbR^{M \times M}$, satisfying that 
$\bD_{i,j} = \langle\tau(\bxi)\psi_i\psi_j\weight\rangle$, and
$\bK_{d, i, j} = \langle \tau(\bxi)v_d\psi_i\psi_j \weight\rangle$,
$0\leq i,j\leq M-1$.

Similar to 1-D case, since $\tau(\bxi)\weight>0$, we have $\bD$ is symmetric and 
positive definite, and $\bK_d$ is symmetric for all $d = 1, \cdots, D$. Furthermore,
for a given $1\leq d\leq D$, denote $\bm g_d \in \bbR^M$ to satisfy that its $k$-th
element is $g_{d, k}$, $0\leq k\leq M-1$, then according to the
constraints 
\begin{equation}
  \langle \tau(\bxi) \psi_k \hat{g}_d \rangle =
  \pd{E_{k}}{x_d},\quad 0\leq k\leq M-1,
\end{equation}
we have 
\begin{equation}
  \label{eq:multid_constrains_matrix}
  \bD\bm g_d = \left( \pd{E_0}{x_d},\pd{E_1}{x_d},\cdots,\pd{E_{M-1}}{x_d}
  \right)^T,
\end{equation}
and the moment closure is given by 
\begin{equation}
  \label{eq:closure_matrix_multiD}
  \left( 
  \pd{\langle \tau(\bxi)v_d\psi_0f\rangle}{x_d},\cdots,
  \pd{\langle \tau(\bxi)v_d\psi_{M-1}f\rangle}{x_d}
  \right)^T  = \bK_d\bm g_d = \bK_d\bD^{-1}
  \left(
  \pd{E_0}{x_d},\pd{E_{1}}{x_d},\cdots,\pd{E_{M-1}}{x_d}
  \right)^T.
\end{equation}
Therefore, the multi-dimensional moment system can be written as 
\begin{equation}
  \label{eq:systemmulti_d_matrix}
  \pd{\bm E}{t} + \sum\limits_{d=1}^D \bK_d \bD^{-1}\pd{\bm E}{x_d} = \bm S,
\end{equation}
where $\bm E = (E_{0},E_{1},\cdots,E_{M-1})^T\in\bbR^{M}$, and $\bm S =
(S_0,S_{1},\cdots,S_{M-1})^T\in\bbR^{M}$.
\subsection{Hyperbolicity} 
\begin{theorem}
  \label{thm:hypermultid}
  The moment system \eqref{eq:systemmulti_d_matrix} is globally hyperbolic.
\end{theorem}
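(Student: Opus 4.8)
The plan is to reduce the statement to the one-dimensional argument used in the proof of \cref{thm:hyper1d}, carried out direction by direction. Recall that the first-order system \eqref{eq:systemmulti_d_matrix} is hyperbolic exactly when, for every nonzero vector $\bm n = (n_1,\dots,n_D)^T\in\bbR^D$, the directional flux Jacobian $\sum_{d=1}^D n_d\,\bK_d\bD^{-1}$ has only real eigenvalues and is diagonalizable over $\bbR$. So I would fix such an $\bm n$, set $\bK_{\bm n} := \sum_{d=1}^D n_d\,\bK_d$, and study the single matrix $\bK_{\bm n}\bD^{-1}$.

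First I would observe that $\bK_{\bm n}$ is symmetric: it is a real linear combination of the matrices $\bK_d$, each of which is symmetric because $\bK_{d,i,j} = \langle\tau(\bxi)v_d\psi_i\psi_j\weight\rangle$ is manifestly symmetric in the index pair $(i,j)$. Next I would recall (as already noted just after \eqref{eq:multid_constrains_matrix}) that $\bD$ is symmetric positive definite whenever $\tau(\bxi)\weight>0$; in particular it has a symmetric positive definite square root $\bD^{1/2}$ with inverse $\bD^{-1/2}$. Then the congruence/similarity trick finishes the argument: $\bK_{\bm n}\bD^{-1}$ is similar to $\bD^{-1/2}\big(\bK_{\bm n}\bD^{-1}\big)\bD^{1/2} = \bD^{-1/2}\bK_{\bm n}\bD^{-1/2}$, and this matrix is symmetric, hence real-diagonalizable with real spectrum by the spectral theorem. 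Similarity preserves eigenvalues and diagonalizability, so $\bK_{\bm n}\bD^{-1}$ is real-diagonalizable with real eigenvalues. Since $\bm n$ was arbitrary, \eqref{eq:systemmulti_d_matrix} is hyperbolic, and it is \emph{globally} so because nothing in the argument used any smallness hypothesis on $\bm\eta$ — only the positivity $\tau(\bxi)\weight>0$, which holds at every admissible state.

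I do not expect a genuine obstacle here: this is the verbatim multi-dimensional analogue of \cref{thm:hyper1d}, the only new bookkeeping being that one quantifies over directions $\bm n$ and passes to the linear combination $\bK_{\bm n}$ before invoking symmetry. If a stronger conclusion were desired, one could additionally note that $\bD^{-1}$ is a common symmetrizer, since $\bD^{-1}\big(\bK_d\bD^{-1}\big) = \bD^{-1}\bK_d\bD^{-1}$ is symmetric for every $d$ with the same positive definite matrix $\bD^{-1}$, so that \eqref{eq:systemmulti_d_matrix} is in fact symmetrizable hyperbolic; but for the claimed global hyperbolicity the direction-by-direction diagonalizability established above already suffices.
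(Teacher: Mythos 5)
Your proposal is correct and follows essentially the same route as the paper: the paper's proof likewise notes that $\bD$ is symmetric positive definite and each $\bK_d$ is symmetric, so that any linear combination $\sum_{d=1}^D n_d\bK_d\bD^{-1}$ is real diagonalizable. You merely spell out the standard similarity argument via $\bD^{-1/2}$ that the paper leaves implicit, and add the (correct but optional) observation that $\bD^{-1}$ is a common symmetrizer.
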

\begin{proof}
  Notice that $\tau(\bxi)\weight>0$ implies that 
  $\bD$ is symmetric and positive definite and $\bK_d$ is
  symmetric for $d = 1, \cdots, D$, thus any linear combination
  $\sum\limits_{d=1}^D n_d \bK_d \bD^{-1}$ is real diagonalizable. Therefore,
moment system \eqref{eq:systemmulti_d_matrix} is globally hyperbolic.
\end{proof}

\subsection{Comparison with the conventional moment system}
It is easy to obtain this theorem 
\begin{theorem}\label{thm:conservation_md}
In the multi-dimensional moment system, the moment  $\langle \tau(\bxi)\phi f\rangle$ 
satisfy an equation in conservation form if $v_d\phi$ can be linearly expressed by
$\psi_l,0\leq l\leq M-1$, for any
$d=1,2,\cdots,D$. 
\end{theorem}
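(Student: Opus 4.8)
The plan is to reduce the statement to the case of a single basis function and then observe that the hypothesis forces the relevant flux to be a function of the conserved moment vector $\bm E$ alone, so that no genuine closure is performed on that component — exactly the mechanism behind the one-dimensional \cref{thm:conservativeness}. First I would note that a moment appearing in the system is of the form $\langle\tau(\bxi)\phi f\rangle$ with $\phi=\sum_{m=0}^{M-1}a_m\psi_m$ for constants $a_m$, so $\langle\tau\phi f\rangle=\sum_m a_m E_m$ and its evolution equation is the same constant linear combination of the $M$ equations of \eqref{eq:systemmulti-d}. Since conservation form is stable under constant linear combinations (a constant-coefficient combination of divergences is a divergence), it suffices to treat each $\psi_k$ for which $v_d\psi_k$ lies in $\mathrm{span}\{\psi_l:0\le l\le M-1\}$ for every $d$.

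Next, for such a $k$ I would expand $v_d\psi_k=\sum_{l=0}^{M-1}c_{d,k,l}\psi_l$, where the coefficients $c_{d,k,l}$ are constants — independent of $t$ and $\bx$ — because the $\psi_l$ and the components $v_d$ are functions of $\bxi$ fixed once and for all. Then the flux in \eqref{eq:systemmulti-d} satisfies $\langle\tau(\bxi)v_d\psi_k f\rangle=\sum_l c_{d,k,l}\,E_l$, a linear function of the state variables, so the $k$-th equation already reads $\pd{E_k}{t}+\sum_d\pd{}{x_d}\bigl(\sum_l c_{d,k,l}E_l\bigr)=S_k$, which is in conservation form. I would then check that the closed system \eqref{eq:systemmulti_d_matrix} agrees on this row: using the defining constraint $\langle\tau\psi_l\hat g_d\rangle=\pd{E_l}{x_d}$ of the ansatz, one gets $\langle\tau v_d\psi_k\hat g_d\rangle=\sum_l c_{d,k,l}\langle\tau\psi_l\hat g_d\rangle=\sum_l c_{d,k,l}\pd{E_l}{x_d}=\pd{}{x_d}\langle\tau v_d\psi_k f\rangle$, so the closure is exact in that row and the conservation form of \eqref{eq:systemmulti-d} is inherited by \eqref{eq:systemmulti_d_matrix}. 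Recombining with the coefficients $a_m$ yields $\pd{}{t}\langle\tau\phi f\rangle+\sum_d\pd{}{x_d}\bigl(\sum_l(\sum_k a_k c_{d,k,l})E_l\bigr)=\langle\tau\phi\mS\rangle$, the desired conservation law.

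I do not expect a real obstacle here: the entire content is the elementary observation that if $v_d\phi$ lies in the moment basis then the corresponding flux is a linear function of $\bm E$, together with the stability of conservation form under constant linear combinations. The only points requiring a little care are that the expansion coefficients $c_{d,k,l}$ (and $a_m$) are genuinely spatially constant — which is why the prescribed, $\bx$-independent choice of the $\psi_l$ and $\bv$ matters — and that one invokes the normalization of the flux-gradient ansatz $\hat g_d$ rather than an ansatz for $f$ itself when verifying consistency with the closed system \eqref{eq:systemmulti_d_matrix}.
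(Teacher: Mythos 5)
The paper states this theorem without any proof (it is introduced only by ``It is easy to obtain this theorem''), so there is no argument of record to compare against; your proposal supplies the intended justification and is correct in substance. The essential mechanism you identify — the constraints $\langle\tau(\bxi)\psi_l\hat g_d\rangle=\pd{E_l}{x_d}$ make the closure exact on any flux whose integrand $v_d\phi$ lies in $\mathrm{span}\{\psi_l:0\le l\le M-1\}$, so that flux gradient collapses to the divergence of a fixed linear combination of the $E_l$ — is exactly the multi-dimensional analogue of \cref{thm:conservativeness}, and your verification that the closed system \eqref{eq:systemmulti_d_matrix} inherits this row from \eqref{eq:systemmulti-d} is right. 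One logical wrinkle: your reduction ``it suffices to treat each $\psi_k$ for which $v_d\psi_k$ lies in the span'' does not follow from the hypothesis, which only asserts that $v_d\phi$ — not each $v_d\psi_m$ with $a_m\neq 0$ — is expressible in the $\psi_l$; a combination of basis functions can satisfy the hypothesis while no individual summand does, in which case your coefficients $c_{d,k,l}$ are undefined. The fix is trivial and already contained in your own computation: drop the reduction, expand $v_d\phi=\sum_l c_{d,l}\psi_l$ directly with constant $c_{d,l}$, and conclude $\langle\tau(\bxi) v_d\phi\,\hat g_d\rangle=\sum_l c_{d,l}\pd{E_l}{x_d}=\pd{}{x_d}\bigl(\sum_l c_{d,l}E_l\bigr)$, which is all that is needed.
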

\begin{remark}
  When the considered moments are the moments with orders from $0$ to $N$, i.e.
  $\psi_{\mathcal{N}(\alpha)} = \bv^{\alpha}$, then the moments with
  orders from $0$ to  
  $(N-1)$, $\langle \tau(\bxi)\bxi^{\alpha} f\rangle$,
  $|\alpha|\leq N-1$ are conservative. 
\end{remark}
\begin{remark}
  In 13-moment system, i.e. $\psi_k$ is given by \eqref{eq:base_13m},
  then 
  \[ 
    \langle \tau(\bxi)f\rangle, \quad \langle \tau(\bxi)\bxi f\rangle,
    \quad \langle \tau(\bxi)|\bxi|^2 f\rangle,
  \]
  are conservative, which often correspond to density $\rho$, momentum
  $\rho \bm u$, and second-order moment $\rho\theta + \rho |\bm u|^2$.
\end{remark}
However, for the entire system, the conservation
form is not always preserved.
The following theorem discusses a special case.

\begin{theorem}\label{thm:special_case_multiD}
  For the multi-dimensional kinetic equation \eqref{eq:kineticmultid}, if we
  use the ansatz 
  \begin{equation}\label{eq:ansatz_mn_multiD}
    \hat{f} = h\left(\sum\limits_{k=0}^{M-1} \eta_{k}
    \psi_k\right)
  \end{equation}
  to approximate the distribution function $f$ and derive a moment
  system $\mathcal{A}_1$. The resulting system is equivalent to the system
  \eqref{eq:systemmulti_d_matrix} 
  by taking the weight function 
  \begin{equation}
    \label{eq:weight_mn_multiD}
    \weight = h'\left(
    \sum_{k=0}^{M-1}\eta_{k}\psi_k \right),
  \end{equation}
  which is denoted as $\mathcal{A}_2$.
  Moreover, the system \eqref{eq:systemmulti_d_matrix} in this situation is
  in conservation form.
\end{theorem}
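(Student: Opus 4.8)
The plan is to mirror the proof of \cref{thm:special_case}, carrying out the same change-of-variables computation, but now once for each coordinate direction $d$. First I would write $\mathcal{A}_1$, the system obtained by closing \eqref{eq:systemmulti-d} with the ansatz \eqref{eq:ansatz_mn_multiD}, in quasilinear form
\[
  \pd{\bm E}{t} + \sum_{d=1}^{D}\bM_d\pd{\bm E}{x_d} = \bm S,
  \qquad (\bM_d)_{k,j} = \pd{\langle \tau(\bxi)v_d\psi_k\hat f\rangle}{E_j},
\]
where the parameters $\eta_0,\dots,\eta_{M-1}$ entering $\hat f = h\!\left(\sum_k \eta_k\psi_k\right)$ are fixed implicitly by the $M$ constraints $\langle \tau(\bxi)\psi_k\hat f\rangle = E_k$. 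As in the one-dimensional case, $\tau(\bxi)\weight>0$ makes $\bD$ symmetric positive definite, hence invertible, so the correspondence $\bm\eta\leftrightarrow\bm E$ is (locally) well defined.

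The key step is differentiation under the integral sign. Since $\tau(\bxi)$ and the $\psi_k$ are independent of $\bm\eta$,
\[
  \pd{E_k}{\eta_i}
  = \left\langle \tau(\bxi)\psi_k\, h'\!\left(\sum_{l}\eta_l\psi_l\right)\psi_i\right\rangle
  = \langle \tau(\bxi)\weight\,\psi_k\psi_i\rangle = \bD_{k,i},
\]
so that $\pd{\bm E}{\bm\eta}=\bD$ and $\pd{\bm\eta}{\bm E}=\bD^{-1}$; likewise
\[
  \pd{\langle \tau(\bxi)v_d\psi_k\hat f\rangle}{\eta_i}
  = \left\langle \tau(\bxi)v_d\psi_k\, h'\!\left(\sum_{l}\eta_l\psi_l\right)\psi_i\right\rangle
  = \langle \tau(\bxi)v_d\weight\,\psi_k\psi_i\rangle = \bK_{d,k,i}.
\]
By the chain rule, $(\bM_d)_{k,j} = \sum_{i=0}^{M-1}\bK_{d,k,i}\bD^{-1}_{i,j} = (\bK_d\bD^{-1})_{k,j}$, i.e.\ $\bM_d = \bK_d\bD^{-1}$ for every $d=1,\dots,D$. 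Therefore $\mathcal{A}_1$ is exactly the system $\mathcal{A}_2$ in \eqref{eq:systemmulti_d_matrix}.

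For the conservation form, I would observe that $\langle \tau(\bxi)v_d\psi_k\hat f\rangle$ depends on $(t,\bx)$ only through $\bm E$, so the closed system is literally $\pd{E_k}{t} + \sum_{d=1}^{D}\pd{}{x_d}\langle \tau(\bxi)v_d\psi_k\hat f\rangle = S_k$; equivalently, since the closure amounts to replacing $f$ by $\hat f$, it reads $\left\langle \tau(\bxi)\psi_k\left(\pd{\hat f}{t}+\sum_{d=1}^{D}v_d\pd{\hat f}{x_d}\right)\right\rangle = S_k$, which is manifestly in conservation form. I do not expect a genuine obstacle: the only delicate points — local invertibility of $\bm\eta\mapsto\bm E$ and the positivity of $\weight=h'(\sum_k\eta_k\psi_k)$ — are inherited verbatim from \cref{thm:special_case}, and the multi-dimensional result is obtained simply by running that argument separately in each direction $d$, the flux matrices $\bK_d$ decoupling cleanly in the computation above.
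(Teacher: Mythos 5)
Your proposal is correct and follows essentially the same route as the paper's proof: write $\mathcal{A}_1$ in quasilinear form, differentiate the moment constraints and the fluxes with respect to $\bm\eta$ to identify $\pd{\bm E}{\bm\eta}=\bD$ and the flux Jacobians with $\bK_d$, and apply the chain rule to get $\bM_d=\bK_d\bD^{-1}$ for each $d$. The conservation-form argument, rewriting the closed system as $\left\langle \tau(\bxi)\psi_k\left(\pd{\hat f}{t}+\bv\cdot\nabla_{\bx}\hat f\right)\right\rangle = S_k$, is also the one the paper uses.
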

\begin{proof}
  We rewrite the moment system $\mathcal{A}_1$ into quasi-linear
  form, 
  \begin{equation}
    \label{eq:system_mn_multiD}
    \pd{\bm{E}}{t} + \sum\limits_{d=1}^D \bM_d \pd{\bm{E}}{x_d} = \bm S,
  \end{equation}
  where the $(i,j)$-th element of matrix
  $\bM_d$ is 
  $\bM_{d,i,j} = \pd{ \langle\tau(\bxi)v_d\psi_i \hat{f}\rangle}{E_{j}}$. 
  Notice 
  \begin{equation}
     \langle \tau(\bxi)v_d\psi_i\hat{f}\rangle = 
     \left\langle \tau(\bxi)v_d\psi_ih\left(
     \sum_{k=0}^{M-1}\eta_{k} \psi_k
     \right)\right\rangle, \quad 0\leq i\leq M-1,
  \end{equation}
  and the constraints provided by the moments satisfying
  $\alpha\in\mI$ are
  \begin{equation}
    E_{i} = \langle \tau(\bxi)\psi_i\hat{f}\rangle = 
     \left\langle \tau(\bxi)\psi_ih\left(
     \sum_{k=0}^{M-1}\eta_{k}\psi_k
     \right)\right\rangle, \quad 0\leq i\leq M-1,
  \end{equation}
  we have that when $d=1,2,\cdots,D$ and $0\leq i,j\leq M-1$, 
  \[   
    \pd{E_{i}}{\eta_{j}} = \pd{\left\langle \tau(\bxi)
    \psi_i
    h\left(\sum_{k=0}^{M-1}\eta_{k}\psi_k\right)
    \right\rangle}{\eta_{j}} = 
    \left\langle\tau(\bxi)\psi_ih'\left(
    \sum_{k=0}^{M-1}\eta_{k}\psi_k
    \right)\psi_j\right\rangle = \left\langle \tau(\bxi)\weight
    \psi_i\psi_j\right\rangle
    =\bD_{i,j},
  \]
  and 
  \[   
  \pd{\langle\tau(\bxi)v_d\psi_i\hat{f}\rangle}{\eta_{j}} = 
  \pd{\left\langle \tau(\bxi)
    v_d\psi_i
    h\left(\sum_{k=0}^{M-1}\eta_{k}\psi_k\right)
    \right\rangle}{\eta_{j}} = 
    \left\langle\tau(\bxi)v_d\psi_ih'\left(
    \sum_{k=0}^{M-1}\eta_{k}\psi_k
    \right)\psi_j\right\rangle 
    = \left\langle \tau(\bxi)\weight
    v_d\psi_i\psi_j \right\rangle
    =\bK_{d,i,j}.
  \]
  Therefore,
  $\pd{\eta_{j}}{E_{i}}$ is the
  $(i,j)$-th element of $\bD^{-1}$, 
  since $\bD$ is symmetric and positive
  definite. Furthermore, according to  
  \[
    \pd{
   \langle\tau(\bxi)v_d\psi_i\hat{f}\rangle
    }{E_{j}} =
    \sum_{k=0}^{M-1} \pd{
   \langle\tau(\bxi)v_d\psi_i\hat{f}\rangle
    }{\eta_{k}}\pd{\eta_{k}}{E_{j}}
    = \sum_{k=0}^{M-1}
  \bK_{d,i,k}\bD^{-1}_{k,j} = (\bK_d\bD^{-1})_{i,j}.
  \]
  Therefore, $\bM_d= \bK_d\bD^{-1}$, for any $d=1,2,\cdots,D$, and the
  system \eqref{eq:system_mn_multiD} is equivalent to
  \eqref{eq:systemmulti_d_matrix}.
  
  At last, since \eqref{eq:system_mn_multiD} can be written as 
  \[
    \left\langle \tau(\bxi)\psi_k \left(\pd{\hat{f}}{t} +
  \bv\cdot \nabla_{\bx}\hat{f} \right)
  \right\rangle =
  \left\langle \tau(\bxi)\psi_k\mS \right\rangle, \quad 0\leq k\leq M-1,
  \]
  \eqref{eq:system_mn_multiD} and \eqref{eq:systemmulti_d_matrix} are
  in conservation form. 
\end{proof}

Next, we consider cases where the number of parameters $\bm\eta$ is
less than the number of moments $M$. We calculate the result of
one-step Maxwellian iteration for setting $f^{(0)} = \weight$ instead of
$f^{(0)} = f_{eq}$.  Similar as 1-D case, we have the following theorem
\begin{theorem}
  \label{thm:maxwelliter_multiD}
  For the multi-dimensional kinetic equation \eqref{eq:kineticmultid}, if we
  use the weight function 
  \begin{equation}\label{eq:ansatz_mnpn_multiD}
    \weightf = h\left(\sum\limits_{i =0}^n \eta_i \psi_i\right),
  \end{equation} where $n\leq M-1$,
  and construct a weighted polynomial $\hat{f} =
  \sum_{i=0}^{M-1}f_i \psi_i \weightf$ to approximate the distribution
  function $f$ and derive a moment system $\mathcal{A}_1$, and denote
  the system \eqref{eq:systemmulti_d_matrix} by taking the weight function 
  \begin{equation}
    \label{eq:weight_mnpn_multiD}
    \weight = h'\left( \sum_{i =0}^n \eta_i \psi_i \right),
  \end{equation}
  as $\mathcal{A}_2$.  Then the results of one-step Maxwellian
  iteration of $\mathcal{A}_1$ and $\mathcal{A}_2$ are the same. 
\end{theorem}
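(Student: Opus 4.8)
The plan is to repeat the proof of \cref{thm:maxwelliter} with the single spatial variable replaced by the family $x_1,\dots,x_D$, since the structure of \eqref{eq:systemmulti_d_matrix} is a direct sum over $d$. First I would put both systems in coefficient form. System $\mathcal{A}_2$ is, by \eqref{eq:systemmulti_d_matrix} with $\weight = h'(\sum_{i=0}^n\eta_i\psi_i)$,
\[
  \pd{\bm E}{t} + \sum_{d=1}^D \bK_d\bD^{-1}\pd{\bm E}{x_d} = \bm S .
\]
For $\mathcal{A}_1$, write $\hat f = \sum_{i=0}^{M-1} f_i\psi_i\weightf$, set $\mEn_k = \moment{\tau(\bxi)\psi_k\weightf}$, $\bDn_{i,j} = \moment{\tau(\bxi)\psi_i\psi_j\weightf}$ and $\bKn_{d,i,j} = \moment{\tau(\bxi)v_d\psi_i\psi_j\weightf}$; then $\bm E = \bDn\bm f$ and each approximated flux equals $\moment{\tau(\bxi)v_d\psi_i\hat f} = (\bKn_d\bm f)_i = (\bKn_d\bDn^{-1}\bm E)_i$, so that $\mathcal{A}_1$ reads $\pd{\bm E}{t} + \sum_{d=1}^D \pd{(\bKn_d\bDn^{-1}\bm E)}{x_d} = \bm S$, which is in conservation form.

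Next, exactly as in 1-D, introduce a change of variables $\bw = (\bmeta,\bmeta^*)\in\bbR^M$ in one-to-one correspondence with $\bm E$, normalized so that $\hat f = \weightf$ iff $\bmeta^* = 0$ iff $E_k = \mEn_k$ for all $k$, and set $\bA := \pd{\bm E}{\bw}$ (invertible without loss of generality). In the variable $\bw$ the two systems become $\bA\pd{\bw}{t} + \sum_d \pd{(\bKn_d\bDn^{-1}\bm E)}{\bw}\pd{\bw}{x_d} = \bm S$ and $\bA\pd{\bw}{t} + \sum_d \bK_d\bD^{-1}\bA\pd{\bw}{x_d} = \bm S$. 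The one-step Maxwellian iteration started from $f^{(0)} = \weightf$ only involves the flux Jacobians evaluated at $\bmeta^* = 0$ contracted against derivatives of the equilibrium state, which vary only in the directions $\eta_0,\dots,\eta_n$; hence it suffices to establish, for every $d$,
\[
  (\bK_d\bD^{-1}\bA)_{m,l}\Big|_{\bmeta^*=0} = \pd{(\bKn_d\bDn^{-1}\bm E)_m}{\eta_l}\Big|_{\bmeta^*=0}, \qquad 0\le m\le M-1,\ 0\le l\le n,
\]
which is the $D$-fold analogue of \eqref{eq:thm_maxwell_target}.

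The computation of the two sides is then identical to \eqref{eq:linear_cal1}. On the set $\{\bmeta^* = 0\}$ one has $E_j = \mEn_j = \bDn_{j,0}$ (here $\psi_0 = 1$ is used), so $\bDn^{-1}\bm E = \be_0$ there and $(\bKn_d\bDn^{-1}\bm E)_m = \bKn_{d,m,0}$; since $l\le n$, $\partial/\partial\eta_l$ is tangent to $\{\bmeta^* = 0\}$, so differentiating the restricted expression and using $\pd{\weightf}{\eta_l} = h'(\sum_i\eta_i\psi_i)\psi_l = \weight\psi_l$ gives $\pd{\bKn_{d,m,0}}{\eta_l} = \moment{\tau(\bxi)v_d\psi_m\weight\psi_l} = \bK_{d,m,l}$. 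For the left side, $\bA_{i,l}\big|_{\bmeta^*=0} = \pd{E_i}{\eta_l}\big|_{\bmeta^*=0} = \pd{\bDn_{i,0}}{\eta_l} = \moment{\tau(\bxi)\psi_i\weight\psi_l} = \bD_{i,l}$, so the $l$-th column of $\bA|_{\bmeta^*=0}$ is the $l$-th column of $\bD$, whence $\bK_d\bD^{-1}\bA$ sends it to the $l$-th column of $\bK_d$ and $(\bK_d\bD^{-1}\bA)_{m,l}\big|_{\bmeta^*=0} = \bK_{d,m,l}$. Both sides coincide for every $d$, which proves the theorem.

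I do not expect a genuine obstacle: as in the 1-D case, the only point requiring care is the legitimacy of interchanging the derivative $\partial/\partial\eta_l$ with the restriction to $\{\bmeta^* = 0\}$, which is valid precisely because $l\le n$, so that $\eta_l$ is one of the coordinates held fixed in the definition of the equilibrium manifold; and the reduction to the identity above, which rests on the fact that one-step Maxwellian iteration reads off the flux Jacobians only through the columns indexed by $\eta_0,\dots,\eta_n$. Everything else is a verbatim transcription of the 1-D argument with $v$ replaced by $v_d$ and the powers $v^k$ replaced by the $\psi_k$, carried out once for each of the $D$ directions.
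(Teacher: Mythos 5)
Your proposal is correct and follows essentially the same route as the paper's own proof: writing $\mathcal{A}_1$ in the conservative form $\pd{\bm E}{t}+\sum_d\pd{(\bKn_d\bDn^{-1}\bm E)}{x_d}=\bm S$, changing variables to $\bw=(\bmeta,\bmeta^*)$, and reducing the claim to the identity $(\bK_d\bD^{-1}\bA)_{m,l}|_{\bmeta^*=0}=\pd{(\bKn_d\bDn^{-1}\bm E)_m}{\eta_l}|_{\bmeta^*=0}$, verified via $E_j=\bDn_{j,0}$ and $\bA_{i,l}|_{\bmeta^*=0}=\bD_{i,l}$ exactly as in the paper. Your additional remark on why restricting to $\{\bmeta^*=0\}$ commutes with $\partial/\partial\eta_l$ for $l\le n$ makes explicit a step the paper leaves implicit, but the argument is otherwise the same.
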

\begin{proof}
  Denote $ \bDn$ as $\bDn_{i,j} = \langle \tau(\bxi)\psi_i
  \psi_j \weightf \rangle$, and $\bKn_d$ as
  $\bKn_{d,i,j} = \langle \tau(\bxi)v_d 
  \psi_i \psi_j \weightf\rangle$, $0\leq i,j\leq M-1$, 
  then $\mathcal{A}_1$ can be written
  as 
  \[
    \pd{\bm{E}}{t} + \sum\limits_{d=1}^D \pd{(\bKn_d
    \bDn^{-1}\bm{E})}{x_d} = \bm{S}.
  \]
  Furthermore, we can rewrite these two systems by a variable
  $\bw=(\bm\eta,\bm\eta^*)\in\bbR^{M}$, where there is an one-to-one
  map between $\bw$ and $\bm{E}$. Additionally, we assume that when
  $\hat{f}=\weightf$, $\bm\eta^*=0$.  Without loss of generality, we
  assume that $\bA := \pd{\bm E}{\bw} $ is invertible. 

  Then $\mathcal{A}_1$ and $\mathcal{A}_2$ can be respectively written
  as 
  \[ 
    \bA\pd{\bw}{t} + \sum\limits_{d=1}^D \pd{(\bKn_d\bDn^{-1}\bm
    E)}{\bw}\pd{\bw}{x_d} = \bm{S},\quad 
    \bA\pd{\bw}{t} + \sum\limits_{d=1}^D \bK_d \bD^{-1}\bA\pd{\bw}{x_d} = \bm{S},\quad 
  \]
  In order to prove that the one-step Maxwellian iteration of these
  two systems are equivalent, one only needs to prove that 
  \begin{equation} 
    \label{eq:thm_maxwell_target_multiD}
    (\bK_d \bD^{-1}\bA)_{m,l}\big|_{\bm\eta^*=0} = 
    \pd{(\bKn_d\bDn^{-1}\bm{E})_m}{\eta_l}\Big|_{\bm\eta^*=0},
    \quad
  0 \leq m \leq M-1, \quad 0\leq l\leq n, \quad 1 \leq d \leq D.
\end{equation}
  The right hand side can be calculated as 
  \begin{equation}
    \label{eq:linear_cal_multiD}
    \pd{(\sum_{i,j=0}^{M-1}\bKn_{d,m,i}\bDn^{-1}_{i,j}E_j)}{\eta_l}\Big|_{\bm\eta^*=0}
    = 
    \pd{[(\sum_{i,j=0}^{M-1}\bKn_{d,m,i}\bDn^{-1}_{i,j}E_j)|_{\bm\eta^*=0}]}{\eta_l} 
    =
    \pd{\bKn_{d,m,0}}{\eta_l}\Big|_{\bm\eta^*=0} =
    \bK_{d,m,l}|_{\bm\eta^*=0},
\end{equation}
where $\psi_0=1$, thus 
  \[
  (\sum_{j=0}^{M-1}\bDn^{-1}_{i,j}E_j)|_{\bm\eta^*=0} =
  \begin{cases}
    0,& i\neq 0,\\
    1, & i =0.
  \end{cases}
  \]
  and 
  \[  
     \pd{\bKn_{d,i,0}}{\eta_l} = \bK_{d,i,l}
     \quad 
     \pd{\bDn_{i,0}}{\eta_l} = \bD_{i,l},\quad 0\leq i\leq M-1, 0\leq
     l\leq n
  \]
  are applied. On the other hand, notice that 
  \[
  \bA_{i,l}|_{\bm\eta^*=0} = \pd{E_i}{\eta_l}\Big|_{\bm\eta^*=0} = 
  \pd{\bDn_{i,0}}{\eta_l}\Big|_{\bm\eta^*=0} =
  \bD_{i,l}|_{\bm\eta^*=0}, \quad 0\leq i\leq M-1,0\leq l\leq n.
  \]
  Therefore, 
  \begin{equation} 
    (\bK_d\bD^{-1}\bA)_{m,l}\Big|_{\bm\eta^*=0} = \bK_{d,m,l}, 
  \quad 0\leq m\leq M-1,
  0\leq l\leq n,
  1\leq d\leq D,
\end{equation}
  Thus \eqref{eq:thm_maxwell_target_multiD} holds.
\end{proof}



\section{Applications in multi-dimensional case}\label{sec:examplemultid}
In this section, we will list come existing hyperbolic moment models
in multi-dimensional case. They can be regarded as special cases of our new
framework. Furthermore, we will also propose some new hyperbolic
models. 

\subsection{Boltzmann equation}\label{sec:Boltzmannmd}
In \cref{sec:Boltzmann1d}, we consider Boltzmann equation in 1-D case
and verify the equality of our framework and some existing hyperbolic
moment models, such as one-dimensional HME and maximum entropy model. 
In this section, we will prove that in multi-dimensional case, the HME
and maximum entropy model can still be included in our new framework
\eqref{eq:systemmulti_d_matrix}. 

First, the multi-dimensional Boltzmann equation reads 

\begin{equation}
  \pd{f}{t} + \bxi \cdot \nabla_{\bx}f = \mS(f), 
\end{equation}
where $f=f(t,\bx,\bxi)$ is the distribution function, with
$\bx\in\bbR^D$, and $\bxi\in\bbR^D$. Compared with
\eqref{eq:kineticmultid}, we have $\bv(\bxi)=\bxi$.

The collision term $\mS(f)$ depicts the
interactions between particles, whose form is given by
\cite{Grad, Struchtrup} 
\[
\mS(f)=\int_{\bbR^D}\int_{S_+^{D-1}}(f'f_*'-ff_*)B(|\bxi-\bxi_*|,\sigma)\dd\bm{n}\dd\bxi_*,
\]
where $f,f',f_*,f'_*$ are the shorthand notations for $f(t,\bx,\bxi)$,
$f(t,\bx,\bxi')$, $f(t,\bx,\bxi_*)$ and $f(t,\bx,\bxi'_*)$,
respectively. $(\bxi,\bxi_*)$ and $(\bxi',\bxi'_*)$ are the velocities
before and after collision, $\bm{n}$ represents the collision angle,
and $B(|\bxi-\bxi_*|,\sigma)$ is the collision kernel.

The local equilibrium is given as Maxwellian \cite{Maxwell}, formulated as 
\[ 
f_{eq} = \dfrac{\rho}{\sqrt{2\pi\theta}^D}\exp\left(
-\dfrac{|\bxi-\bm u|^2}{2\theta} \right),
\]
where 
\begin{equation}
  \rho = \langle f\rangle, \quad \rho \bm{u} = \langle \bxi
  f\rangle,\quad 
  \dfrac{1}{2}\rho|\bm u|^2 + \dfrac{D}{2}\rho\theta = \left\langle
  \dfrac{|\bxi|^2}{2}f \right\rangle.
\end{equation}

\subsubsection{HME model} \label{sec:HME}
The Grad's moment model \cite{Grad} suggests to use a weighted polynomial to approximate the
distribution function, where the weight function is taken as the local
equilibrium,
\[
\tilde{\omega}^{[\bm\eta]} =  
  \dfrac{\rho}{\sqrt{2 \pi \theta}^D}
  \exp\left(-\frac{|\bxi-\bm u|^2}{2\theta}\right),
\] 
By taking $\bm\eta=\left(\ln\left(\dfrac{\rho}{\sqrt{2\pi\theta}^D}\right)
-\dfrac{|\bm u|^2}{2\theta}, \dfrac{\bm
u}{\theta},-\dfrac{1}{2\theta}\right)^T$, one can rewrite the weight
function into the form \eqref{eq:ansatz_mnpn_multiD}.

The moments considered in $N$-th order HME model is $E_k$, $0\leq k\leq
M-1$, where $M = \binom{N+D}{D}$, and 
\[ 
  E_{\mN(\alpha)} = \langle\bxi^{\alpha}f\rangle,\quad |\alpha|\leq N.
\]
Compared with \eqref{eq:moment_multid}, we have $\tau(\bxi)=1$.
\begin{remark}
  In this example, we require that 
  \[  
    \mN\left(
    \left\{ \alpha\in\bbN^D : |\alpha|\leq N\right\}
    \right) = \{0,1,2,\cdots,M-1\}.
  \]
  For example, we can let \cite{Fan_new} 
  \[ 
  \mN(\alpha) = \sum_{i=1}^{D}
  \binom{\sum_{k=D-i+1}^D\alpha_k+i-1}{i}.
  \]
  In the following discussion of this paper, we will use
  $(\cdot)_{\alpha}$ to represent the $\mN(\alpha)$ of vector
  $(\cdot)$, and $(\cdot)_{\alpha,\beta}$ to represent the
  $(\mN(\alpha),\mN(\beta))$-th element of matrix $(\cdot)$.
\end{remark}
The ansatz for $f$ is 
\[ 
\hat{f} = 
\sum_{|\alpha|\leq N}
f_{\alpha}\mHe_{\alpha}
 = \tilde{\omega}^{[\bm\eta]} \sum_{|\alpha|\leq N}
f_{\alpha}\He_{\alpha},
\]
where $\mHe_{\alpha}(\bxi)=\He_{\alpha}(\bxi)\weightf(\bxi)$ is orthogonal basis, and
$\He_{\alpha}(\bxi)$ is generalized Hermite polynomial, which is orthogonal
polynomials with respect to the weight function $\weightf$. Actually,
according to \cite{NRxx,Fan_new}, we have 
\[  
   \He_{\alpha}(\bxi) = \prod_{d=1}^{D}\He_{\alpha_d}(\xi_d)
   =\prod_{d=1}^{D}\text{He}_{\alpha_d}\left(
   \dfrac{\xi_d-u_d}{\sqrt{\theta}} \right),
\]
where $\He_{k}$ and $\text{He}_k$ are defined in \cref{sec:HME1d}.

With a
little abuse of the notation, we assume $\He_{\alpha}$ are unit
orthogonal polynomials, and $\mHe_{\alpha}$ is unit orthogonal basis.

\begin{property}\label{pro:3dhmeequal}
  Multi-dimensional HME model is equivalent to the moment model
  \eqref{eq:systemmulti_d_matrix} 
by taking the weight function as 
\[
\weight =  
  \dfrac{\rho}{\sqrt{2 \pi \theta}^D}
  \exp\left(-\frac{|\bxi-\bm u|^2}{2\theta}\right).
  \]
\end{property}
\begin{proof}
According to the statements in \cite{framework}, multi-dimensional HME
system can be written as 
\begin{equation}
  \label{eq:HMEmd}
  \tilde{\bD}\pd{\bw}{t} +
  \sum_{d=1}^{D}\tilde{\bM}_d\tilde{\bD}^{-1}\pd{\bw}{x_d} =
  \tilde{\bm{S}},
\end{equation}
where $\tilde{\bm{S}}\in\bbR^{M}$ is a vector whose $\mN(\alpha)$-th element is 
\[
\tilde{\bm{S}}_{\alpha} = \langle \He_{\alpha}\mathcal{S}\rangle = \int_{\bbR^D}
  \He_{\alpha}\mathcal{S}\dd\bxi.
\]
Moreover,
$\bw\in\bbR^{M}$ is a vector corresponding to the ansatz of the
distribution function $\hat{f}$. Furthermore, we have \cite{framework}
for $|\alpha|,|\beta|\leq N$ and $d=1,2,\cdots,D$, 
\begin{equation}
  \label{eq:matrixD_oldframework}
  \tilde{\bD}_{\alpha,\beta} = 
  \int_{\bbR^D} \He_{\alpha}\pd{\hat{f}}{\bw_{\beta}} \dd\bxi,
\end{equation}
and 
\begin{equation}
  \label{eq:matrixM_oldframework}
  \tilde{\bM}_{d,\alpha,\beta} = 
  \int_{\bbR^D} \xi_d\He_{\alpha}\He_{\beta} \weightf \dd\bxi.
\end{equation}

On the other hand, in our new framework, take the weight function as 
\[
\weight  = 
  \dfrac{\rho}{\sqrt{2 \pi \theta}^D}
  \exp\left(-\frac{|\bxi-\bm u|^2}{2\theta}\right),
\]
and the basis function $\psi_{\mN(\alpha)} = \bxi^{\alpha}$,
$|\alpha|\leq N$, then 
the resulting system is \eqref{eq:systemmulti_d_matrix}. Now we prove
  that these two system are the same.  First, notice that
  $\He_{\alpha}, |\alpha|\leq N$ and $\psi_{\mN(\alpha)} =
  \bxi^{\alpha},|\alpha|\leq N$ are two bases of 
\[ 
  \{\bxi^{\alpha}:|\alpha|\leq N\},
\]
there exists a matrix $\bT\in\bbR^{M\times M}$, which satisfies 
\[   
   (\He_{0}, \He_{e_1}, \dots, \He_{\alpha^*})^T 
    = \bT (\bxi^{0},\bxi^{e_1},\dots,\bxi^{\alpha^*})^T,
\]
where $\alpha^*$ is the last element in
$\mI=\{\alpha\in\bbN^D:|\alpha|\leq N\}$, 
i.e. $\mN(\alpha^\ast)
= M-1$.

According to the definition of $\tilde{\bm{S}}$ and $\bm{S}$, we have 
\begin{equation}
  \label{eq:equal_S}
  \tilde{\bm{S}} = \bT \bm{S}.
\end{equation}
Moreover, let matrix $\tilde{\bD}'$ be defined as 
\begin{equation}
  \label{eq:matrixD'_oldframework}
  \tilde{\bD}'_{\alpha,\beta} = 
  \int_{\bbR^D} \bxi^{\alpha}\pd{\hat{f}}{\bw_{\beta}} \dd\bxi,
\end{equation}
then we have $\tilde{\bD}'\pd{\bw}{s} = \pd{\bm{E}}{s}$,
$s=t,x_1,x_2,\cdots,x_D$, and $\tilde{\bD} = \bT\tilde{\bD}'$.
Moreover,  
\[  
\tilde{\bM}_d =  \bT \bK_d\bT^{T},
\]
and 
\[
\bD = \bT^{-1}\bT^{-T},
\]
where $\bD$ and $\bK_d$ are as defined in \eqref{eq:systemmulti_d_matrix}.
Therefore, we have 
\[
\tilde{\bM}_d = \bT \bK_d \bD^{-1}\bT^{-1}, 
\]
and \eqref{eq:HMEmd} can be rewritten as 
\[  
\bT\pd{\bm E}{t} + \sum_{d=1}^{D}\bT\bK_d\bD^{-1}\pd{\bm E}{x_d} =
   \bT\bm{S},
\]
which is equivalent to \eqref{eq:systemmulti_d_matrix}. Therefore,
multi-dimensional HME can be included by our framework.
\end{proof}
\subsubsection{HME model for 13-moment system}\label{sec:13m}
Grad's 13-moment \cite{Grad} is the most famous model in the Grad-type
moment hierarchy, and the HME model
for 13-moment system is its hyperbolic regularized version
\cite{Fan2015}. Different from the Grad's moment method mentioned in
the previous section, the
moments considered here do not correspond to $\langle
\tau(\bxi)\bxi^{\alpha} f\rangle$, with $|\alpha|\leq N$. In this
section, we will introduce the HME model for 13-moment system and put
it into our new framework \eqref{eq:systemmulti_d_matrix}. 

In this case, the dimension $D=3$, and the moments are defined as 
\[  
   \bm{E} = (\langle f \rangle, \langle \xi_1 f\rangle,
   \langle \xi_2 f\rangle, \langle \xi_3 f\rangle, 
   \langle\xi_1^2 f\rangle, \langle \xi_1\xi_2 f\rangle, \langle
   \xi_1\xi_3 f\rangle,
   \langle \xi_2^2 f\rangle, \langle\xi_2\xi_3 f\rangle,
   \langle \xi_3^2f\rangle, 
   \langle \Vert\bxi\Vert^2\xi_1f\rangle,
   \langle \Vert\bxi\Vert^2\xi_2f\rangle,
   \langle \Vert\bxi\Vert^2\xi_3f\rangle
   )^T,
\]
which implies that $\tau(\bxi)=1$, and  
\[  
\bm{\psi} =
(1,\xi_1,\xi_2,\xi_3,\xi_1^2,\xi_1\xi_2,\xi_1\xi_3,\xi_2^2,\xi_2\xi_3,\xi_3^2,
\Vert\bxi\Vert^2\xi_1,
\Vert\bxi\Vert^2\xi_2,
\Vert\bxi\Vert^2\xi_3)^T.
\]

$\pd{f}{x_d}$ is approximated by the ansatz $\hat{g}_d$, written as  
\[  
\hat{g}_d = \weight \sum_{i=0}^{M-1} g_i\psl_i.
\]
then the moment system can be written as 
\begin{equation}
  \label{eq:13m}
\pd{\bm{E}}{t} + \sum_{d=1}^{3}\bK_d\bD^{-1}\pd{\bm{E}}{x_d} = \bm{S},
\end{equation}
where the $(i,j)$-th element of $\bD$ and $\bK_d$ are 
\[  
\bD_{i,j} = \langle \psl_i\psl_j\weight\rangle,\quad 
\bK_{d,i,j} = \langle \xi_d\psl_i\psl_j\weight\rangle.
\]
Using the same technique as in the proof of \cref{pro:3dhmeequal}, it
is not difficult to see that \eqref{eq:13m} is equivalent to the HME13
system proposed in \cite{Fan2015}.

\subsubsection{Maximum entropy model}
The moments considered in the full moment maximum entropy model
\cite{levermore1996moment} 
for multi-dimensional case 
are described by $\psi_{\mN(\alpha)} = \bxi^{\alpha}$, $|\alpha|\leq
N$.

The maximum entropy model uses the following ansatz 
for the distribution function:
\begin{equation}\label{eq:mep_ansatz_multiD}
  \hat{f}(t,\bx,\bxi) = \exp\left(\sum\limits_{k=0}^{M-1}
  \eta_{k}(t,\bx) \psi_k
  \right),
\end{equation}
which corresponds to $h(\zeta) = \exp(\zeta)$ in \cref{thm:special_case_multiD}.
Therefore, according to \cref{thm:special_case_multiD}, take the weight
function as 
\[  
\weight = h'\left( \sum\limits_{k=0}^{M-1} \eta_{k}\psi_k\right) = 
\exp\left( \sum_{k=0}^{M-1}\eta_{k}\psi_k \right),
\]
and we know the two moment models are the same.

\subsection{Three-dimensional radiative transfer equation}
Three-dimensional radiative transfer equation is written as 
\begin{equation}
  \label{eq:3drte}
  \dfrac{1}{c}\pd{I}{t} + \bxi\cdot\nabla_{\bx}I = \mathcal{S},
\end{equation}
where the distribution function (specific intensity)
$I=I(t,\bx,\bxi)$, and $t\in\bbR^+$, $\bx\in\bbR^3$,
$\bxi\in\bbS^2$. The right hand side $\mathcal{S}$ depicts the
interactions with other photons and  the background medium. In RTE, we
have $\tau(\bxi)=1$ and $\bv(\bxi)=\bxi$.

In radiative transfer equation, notice that $\Vert \bxi\Vert = 1$, we
have that 
\[   
  \sum_{d=1}^{3} \langle \bxi^{\alpha}\xi_d^{2} f\rangle = 
  \langle \bxi^{\alpha} f\rangle,\quad \alpha\in\bbN^3.
\]
Thus the moments $\langle \bxi^{\alpha}f\rangle$, $|\alpha|\leq N$ 
can be expressed by 
$ 
  \langle \bxi^{\alpha}f\rangle$,$\alpha\in\mI$, with 
  \[
   \mI = \{\alpha\in\bbN^3:|\alpha|\leq N,\alpha_3\leq 1 \},
  \]
  and $M=\#\mI = (N+1)^2$. A map from $\mI$ to $\{0,1,2,\cdots,M-1\}$
  can be  defined as 
  \[
    \mN(\alpha) = (\alpha_1+\alpha_2+\alpha_3)^2 +
    (\alpha_1+\alpha_2+\alpha_3+1)\alpha_3  + \alpha_2,
  \]
  and the basis function is $\psi_{\mN(\alpha)} = \bxi^{\alpha}$,
  $\alpha\in\mI$. 

\subsubsection{\MN model}
To derive the \MN model for gray RTE, one uses the following ansatz for the
distribution function:
\begin{equation}\label{eq:mn_ansatz_multiD}
  \hat{f}(t,\bx,\bxi) = \left(\sum\limits_{k=0}^{M-1}
  \eta_{k}(t,\bx) \psi_k
  \right)^{-4},
\end{equation}
which corresponds to $h(\zeta) = \dfrac{1}{\zeta^4}$ in
\cref{thm:special_case_multiD}.

On the other hand, for the \MN model for monochromatic radiative transfer, 
the ansatz to approximate the distribution function is given by 
\begin{equation}
  \hat{f} (t,\bx,\bxi) = \left[\exp\left(\sum\limits_{k=0}^{M-1}
  \eta_{k}(t,\bx) \psi_k \right)-
  1\right]^{-1},
\end{equation}
which corresponds to $h(\zeta) = \dfrac{1}{\exp(\zeta)-1}$ in
\cref{thm:special_case_multiD}.

Therefore, according to \cref{thm:special_case_multiD}, we take the weight function 
for the grey RTE as 
\[ 
\weight = \dfrac{1}{\left(\sum\limits_{k=0}^{M-1}
  \eta_{k}\psi_k \right)^5},
\]
and the weight function for the monochromatic RTE is 
\[
\weight = \dfrac{\exp\left(\sum\limits_{k=0}^{M-1}
\eta_{k}\psi_k\right)}{
\left(\exp\left(\sum\limits_{k=0}^{M-1}\eta_{k}
  \psi_k\right)-1\right)^2}.
\]
Then the \MN model for the grey RTE and the monochromatic RTE can also
regarded as examples of our new framework.

\subsubsection{\HMPN model}
In \cite{3DMPN}, the \HMPN model was extended to 3D case, and a
globally hyperbolic moment model for radiative transfer equation was
carried out.

By taking 
\[  
\weight = \dfrac{1}{(\eta_0+\eta_1\xi_1+\eta_2\xi_2+\eta_3\xi_3)^5},
\]
using the same technique in the proof of \cref{pro:3dhmeequal}, we
know \eqref{eq:systemmulti_d_matrix} is equivalent to 3D \HMPN model
in \cite{3DMPN}. Furthermore, according to \cref{thm:conservation_md},
the first $(N-1)$-order moments can be written as conservation law,
and according to 
\cref{thm:maxwelliter_multiD}, we know the one-step Maxwellian
iteration of the 3D \HMPN model preserves the result of the one-step
Maxwellian iteration of the 3D \MPN model, which will be changed by a
direct application of the hyperbolic regularization in
\cite{Fan2015,framework}.

\subsection{13-moment model for quantum gas}

The quantum Boltzmann equation, as well as the well-known
Uehling-Uhlenbeck equation \cite{uehling1933}, is written as 
\[ 
     \pd{f}{t} + \bxi\cdot \nabla_{\bx} f = \mathcal{S}(f),
\]
where the collision term is defined as 
\[
\mathcal{S}(f) = \int_{\bbR^3}\int_{0}^{2\pi}\int_{0}^{\pi}
[(1-\theta f)(1-\theta f_*)f'f'_*-(1-\theta f')(1-\theta f'_*)ff_*]
g\sigma\sin \chi\dd\chi\dd\varepsilon\dd\bxi_*.
\]
Here $f,f',f_*,f'_*$ are the shorthand notations for $f(t,\bx,\bxi)$,
$f(t,\bx,\bxi')$, $f(t,\bx,\bxi_*)$ and $f(t,\bx,\bxi'_*)$,
respectively. $(\bxi,\bxi_*)$ and $(\bxi',\bxi'_*)$ are the velocities
before and after collision. $\varepsilon$ is the scattering angle,
$\chi$ is the deflection angle, $g=|\bxi-\bxi_*|$, and $\sigma$ is the
differential cross section. $\theta=1,0,-1$ corresponds to Fermion,
classical gas and Boson, respectively. Since classical gas has been
studied in the Boltzmann equation in \cref{sec:Boltzmannmd}, in this example, 
we focus on Fermion and Boson. Compared with \eqref{eq:kineticmultid},
we have $\bv(\bxi)=\bxi$.

The macroscopic variables can be defined as 
\[
\rho = \dfrac{m}{\hat{h}^3}\int_{\bbR^3} f\dd\bxi, \quad 
\rho \bm u = \dfrac{m}{\hat{h}^3}\int_{\bbR^3} \bxi f\dd\bxi, 
 \quad  
 p  = \dfrac{m}{3\hat{h}^3}\int_{\bbR^3} |\bxi-\bm u|^2 f\dd\bxi, 
\]
where $m$ is the mass of the particle, $\hat{h}=h/m$, and $h$ is
Planck's constant. 

The thermodynamic equilibrium is 
\begin{equation}\label{eq:quantum_feq}
f_{eq} = \dfrac{1}{\zz^{-1}\exp\left( \frac{|\bxi-\bm u|^2}{2RT}
\right)+ \theta},
\end{equation}
where $\zz$ and $RT$ is related to $\rho$ and $p$ as  
\begin{equation}\label{eq:relation_rhopzt}
    \rho = \constantsymbol\sqrt{2\pi\Temperature}^3\Lihalf{3},\quad 
    p=\constantsymbol\sqrt{2\pi\Temperature}^3\Temperature \Lihalf{5},
\end{equation}
and $\Li_s:=-\theta\Li_s(-\theta\zz)$ is the polylogarithm.
For the special case $\theta=0$, let $\Li_s=\zz$.

The considered moments in the quantum Grad's 13-moment system \cite{QuantumGrad13} 
are 
\[  
   \bm{E} = (\langle f \rangle, \langle \xi_1 f\rangle,
   \langle \xi_2 f\rangle, \langle \xi_3 f\rangle, 
   \langle\xi_1^2 f\rangle, \langle \xi_1\xi_2 f\rangle, \langle
   \xi_1\xi_3 f\rangle,
   \langle \xi_2^2 f\rangle, \langle\xi_2\xi_3 f\rangle,
   \langle \xi_3^2f\rangle, 
   \langle \Vert\bxi\Vert^2\xi_1f\rangle,
   \langle \Vert\bxi\Vert^2\xi_2f\rangle,
   \langle \Vert\bxi\Vert^2\xi_3f\rangle
   )^T,
\]
which implies that $\tau(\bxi)=1$, and  
\[  
\bm{\psi} =
(1,\xi_1,\xi_2,\xi_3,\xi_1^2,\xi_1\xi_2,\xi_1\xi_3,\xi_2^2,\xi_2\xi_3,\xi_3^2,
\Vert\bxi\Vert^2\xi_1,
\Vert\bxi\Vert^2\xi_2,
\Vert\bxi\Vert^2\xi_3)^T.
\]

On the other hand, in \cite{QuantumGrad13}, the authors proposed an ansatz for the distribution 
\[
  f_{G13} = f_{eq} \sum_{i=0}^{12} f_i\psi_i,
\]
where $f_i$ are coefficients.  It was pointed out in
\cite{di2016quantum} that quantum Grad's 13-moment system is not
hyperbolic and a direct application of the classical framework in
\cite{framework, Fan2015} will change the NSF law of the quantum
Grad's 13-moment system. Therefore, researchers in
\cite{di2016quantum} proposed a new regularized 13-moment system, by
splitting the expansion into the equilibrium and non-equilibrium part
and then applying the framework in \cite{framework, Fan2015}. However, this
method can not be extended to higher order moment model, in which case
the conservation of most moments can not be preserved. In this paper,
based on the method in \cref{sec:modelmultid}, we propose a new
hyperbolic regularization of the quantum Grad's 13-moment system.  

The weight function is taken as 
\[
\weight = \dfrac{\zz^{-1}\exp\left( \frac{|\bxi-\bm u|^2}{2RT} \right)}
{\left(\zz^{-1}\exp\left( \frac{|\bxi-\bm u|^2}{2RT}
\right)+ \theta\right)^2},
\]
a globally hyperbolic moment system is obtained.  According to
\cref{thm:conservation_md}, we have the conservation of $\rho$, $\rho
\bm{u}$, and $p$.  For higher order moment models, we can get that
higher order moments satisfy equations in conservation form, which can
not be ensured by the moment system proposed in \cite{di2016quantum}.
On the other hand, according to \cref{thm:maxwelliter_multiD}, the NSF
law of the new moment system is the same as the quantum Grad's
13-moment system, since $f_{eq}$ and $\weight$ satisfy the conditions
in \cref{thm:maxwelliter_multiD}.


\section{Conclusion}\label{sec:conclusion}

We proposed an improved framework for globally hyperbolic model
reduction of kinetic equations by a new closure approach. Different
from previous works, most of which derive moment closure by specifying
an ansatz for the distribution function, our method directly gives the
closing relationship of the flux gradient by approximating the spatial
derivative of the distribution function with weighted
polynomials. Besides being globally hyperbolic, moment models derived
in this way satisfy nice physical properties, such as except for the
highest order moments, all the other moments satisfy equations in
conservation form. Also, the results of the Maxwellian iteration after
the first step remain unchanged, ensuring the models satisfy important
properties like the Navier-Stokes-Fourier law. As shown in previous
sections, most of the globally hyperbolic moment models developed in
the previous literature could be regarded as special cases of this new
framework, and we also derived new moment models, such as for the
monochromatic radiative transfer equation, phonon Boltzmann equation
and quantum gases, which are either the first moment models for such
equations with global hyperbolicity, or improvements on previously
developed ones. This framework provides us with a new tool to derive
in a routine way globally hyperbolic moment models which preserve nice
physical properties. Numerical schemes specially designed for this
type of moment model will be further investigated.

\section*{Acknowledgements}
The work of R.L. and L.Z. is partially supported by Science Challenge
Project, No. TZ2016002 and the National Natural Science Foundation in
China (Grant No.  11971041). The work of W.L. is partially supported
by Science Challenge Project (NO. TZ2016002) and National Natural
Science Foundation of China (12001051).


\bibliographystyle{abbrv}
\bibliography{article}
\end{document}